\newif\ifprocs
\newif\ifeightpage
\newcommand{\commentout}[1]{}
\newcommand{\calC}{{\mathcal C}}
\newcommand{\calY}{\mathcal{Y}}
\newcommand{\calL}{\ensuremath{\mathcal{L}\xspace}}
\newcommand{\R}{\mathbb{R}}
\newcommand{\opt}{\ensuremath{\mathrm{OPT}\xspace}}
\newcommand{\apx}{\ensuremath{\mathrm{APX}\xspace}}
\newcommand{\dist}{\mathsf{dist}}
\DeclareMathOperator{\cost}{cost}
\DeclareMathOperator{\topp}{top}
\newcommand{\ProblemName}[1]{\textsc{#1}}
\newcommand{\OkM}{\ProblemName{Ordered $k$-Median}\xspace}
\newcommand{\pCentrum}{\ProblemName{$p$-Centrum}\xspace}
\newcommand{\kCenter}{\ProblemName{$k$-Center}\xspace}
\newcommand{\kMedian}{\ProblemName{$k$-Median}\xspace}
\newcommand{\kMeans}{\ProblemName{$k$-Means}\xspace}
\newtheorem{theorem}{Theorem}[section]
\newtheorem{lemma}[theorem]{Lemma}
\newtheorem{claim}[theorem]{Claim}
\newtheorem{observation}[theorem]{Observation}
{\theoremstyle{remark} }
{\theoremstyle{definition} }
\newenvironment{proofof}[1]{\begin{proof}[Proof of #1]}{\end{proof}}
\definecolor {processblue}{cmyk}{0.96,0,0,0}
\definecolor{Blue}{rgb}{1.,0.75,0.8}
\definecolor{olive}{rgb}{0.3, 0.4, .1}
\definecolor{fore}{RGB}{249,242,215}
\definecolor{back}{RGB}{51,51,51}
\definecolor{title}{RGB}{255,0,90}
\definecolor{dgreen}{rgb}{0.,0.6,0.}
\definecolor{gold}{rgb}{1.,0.84,0.}
\definecolor{JungleGreen}{cmyk}{0.99,0,0.52,0}
\definecolor{BlueGreen}{cmyk}{0.85,0,0.33,0}
\definecolor{RawSienna}{cmyk}{0,0.72,1,0.45}
\definecolor{Magenta}{cmyk}{0,1,0,0}
\newcommand{\eps}{\epsilon}
\newcommand{\mycitename}[2]{\citet{#2}}
\newcommand{\mycitename}[2]{{#1}~\cite{#2}}
\newcommand{\mycite}[1]{\mycitename{}{#1}}
\title{Coresets for Ordered Weighted Clustering}
\author{
  Vladimir Braverman    \thanks{Johns Hopkins University, USA. This material is based upon work supported in part by the National Science Foundation under
Grants No. 1447639, 1650041 and 1652257, Cisco faculty award, and by the ONR Award N00014-18-1-2364.
Email: \texttt{vova@cs.jhu.edu}
  }
  \and Shaofeng H.-C. Jiang  \thanks{Weizmann Institute of Science.
    This work was partially supported by ONR Award N00014-18-1-2364,
    the Israel Science Foundation grant \#897/13,
    a Minerva Foundation grant, and a Google Faculty Research Award.
    Part of this work was done while was visiting the Simons Institute for the Theory of Computing.
    Email: \texttt{\{shaofeng.jiang,robert.krauthgamer\}@weizmann.ac.il}.
  }
  \and Robert Krauthgamer\footnotemark[2]
  \and Xuan Wu \thanks{Johns Hopkins University, USA. Email: \texttt{wu3412790@gmail.com}}
}
\begin{document}
\ifprocs
\twocolumn[
\icmltitle{Coresets for Ordered Weighted Clustering}

\icmlsetsymbol{equal}{*}

\begin{icmlauthorlist}
	\icmlauthor{Vladimir Braverman}{equal,jhu}
	\icmlauthor{Shaofeng H.-C. Jiang}{equal,wis}
	\icmlauthor{Robert Krauthgamer}{equal,wis}
	\icmlauthor{Xuan Wu}{equal,jhu}
\end{icmlauthorlist}

\icmlaffiliation{jhu}{Johns Hopkins University, USA. This material is based upon work supported in part by the National Science Foundation under
	Grants No. 1447639, 1650041 and 1652257, Cisco faculty award, and by the ONR Award N00014-18-1-2364.}
\icmlaffiliation{wis}{Weizmann Institute of Science, Israel. This work was partially supported by ONR Award N00014-18-1-2364,
	the Israel Science Foundation grant \#897/13,
	a Minerva Foundation grant, and a Google Faculty Research Award.
	Part of this work was done while was visiting the Simons Institute for the Theory of Computing.}

\icmlcorrespondingauthor{Vladimir Braverman}{vova@cs.jhu.edu}
\icmlcorrespondingauthor{Shaofeng H.-C. Jiang}{shaofeng.jiang@weizmann.ac.il}
\icmlcorrespondingauthor{Robert Krauthgamer}{robert.krauthgamer@weizmann.ac.il}
\icmlcorrespondingauthor{Xuan Wu}{abcd@abc.com}

\icmlkeywords{Coreset, Clustering, Fairness, Ordered Weighted Averaging}

\vskip 0.3in
]
\printAffiliationsAndNotice{\icmlEqualContribution}
\else
\maketitle
\fi

\begin{abstract}
  We design coresets for \OkM, a generalization of classical
  clustering problems such as \kMedian and \kCenter,
  that offers a more flexible data analysis,
  like easily combining multiple objectives
  (e.g., to increase fairness or for Pareto optimization). 
  Its objective function is defined via the
  Ordered Weighted Averaging (OWA) paradigm of Yager (1988), 
  where data points are weighted according to a predefined weight vector,
  but in order of their contribution to the objective (distance from the centers). 

  A powerful data-reduction technique, called a coreset,
  is to summarize a point set $X$ in $\R^d$
  into a small (weighted) point set $X'$,
  such that for every set of $k$ potential centers,
  the objective value of the coreset $X'$ approximates that of $X$
  within factor $1\pm \epsilon$. 
  When there are multiple objectives (weights),
  the above standard coreset might have limited usefulness,
  whereas in a \emph{simultaneous} coreset, 
  which was introduced recently by Bachem and Lucic and Lattanzi (2018), 
  the above approximation holds for all weights (in addition to all centers).
  Our main result is a construction of a simultaneous coreset
  of size $O_{\epsilon, d}(k^2 \log^2 |X|)$ for \OkM.

  To validate the efficacy of our coreset construction
  we ran experiments on a real geographical data set.
  We find that our algorithm produces a small coreset,
  which translates to a massive speedup of clustering computations,
  while maintaining high accuracy for a range of weights. 
\end{abstract}

 \section{Introduction} \label{sec:Intro}

We study data reduction (namely, coresets) for a class of clustering problems,
called ordered weighted clustering, 
which generalizes the classical \kCenter and \kMedian problems.
In these clustering problems, the objective function is computed
by ordering the $n$ data points by their distance to their closest center,
then taking a weighted sum of these distances,
using predefined weights $v_1\geq \cdots \geq v_n \geq 0$. 
These clustering problems can interpolate between
\kCenter (the special case where $v_1=1$ is the only non-zero weight)
and \kMedian (unit weights $v_i=1$ for all $i$),
and therefore offer flexibility in prioritizing points with large service cost,
which may be important for applications 
like Pareto (multi-objective) optimization and fair clustering. 
In general, fairness in machine learning is seeing a surge in interest, 
and is well-known to have many facets. 
In the context of clustering, previous work
such as the fairlets approach of~\mycite{DBLP:conf/nips/Chierichetti0LV17},
has addressed protected classes, which must be identified in advance. 
In contrast, ordered weighted clustering addresses fairness 
towards remote points (which can be underprivileged communities), 
without specifying them in advance.
This is starkly different from many application domains, 
where remote points are considered as outliers (to be ignored)
or anomalies (to be detected), 
see e.g., the well-known survey by~\mycite{DBLP:journals/csur/ChandolaBK09}. 

Formally, we study two clustering problems in Euclidean space $\R^d$.
In both of them, the input is $n$ data points $X\subset \R^d$ (and $k\in[n]$), 
and the goal is to find $k$ centers $C\subset \R^d$
that minimize a certain objective $\cost(X,C)$.
In \OkM, there is a predefined non-decreasing weight vector $v\in \R_+^n$,
and the data points $X=\{x_1,\ldots,x_n\}$ are ordered
by their distance to the centers, i.e., $d(x_1,C) \geq \cdots\geq d(x_n,C)$,
to define the objective 
\begin{equation} \label{eq:costv}
  \cost_v(X,C) := \sum_{i=1}^n v_i \cdot d(x_i,C) ,
\end{equation}
where throughout $d(\cdot,\cdot)$ refers to $\ell_2$ distance, 
extended to sets by the usual convention
$\dist(x, C) := \min_{c \in C} \dist(x, c)$. 
This objective follows the Ordered Weighted Averaging (OWA) paradigm
of \mycite{Yager88}, 
in which data points are weighted according to a predefined weight vector,
but in order of their contribution to the objective.
The \pCentrum problem is the special case 
where the first $p$ weights equal $1$ and the rest are $0$, 
denoting its objective function by $\cost_p(X,C)$.
Observe that this problem already
includes both \kCenter (as $p=1$) and \kMedian (as $p=n$).

A powerful data-reduction technique, called a \emph{coreset},
is to summarize a large point set $X$ into a (small) multiset $X'$,
that approximates well a given cost function (our clustering objective) 
for every possible candidate solution (set of centers). 
More formally, $X'$ is an $\eps$-coreset of $X$
for clustering objective $\cost(\cdot,\cdot)$
if it approximates the objective within factor $1\pm\eps$, i.e., 
\[
  \forall C\subset \R^d, |C|=k, 
  \quad
  \cost(X', C) \in (1\pm \eps) \cost (X, C) .
\]
The \emph{size} of $X'$ is the number of \emph{distinct} points in it.\footnote{A common alternative definition is that
  $X'$ is as a set with weights $w:X'\to\R_+$, which represent multiplicities,
  and then size is the number of non-zero weights.
  This would be more general if weights are allowed to be fractional, 
  but then one has to extend the definition of $\cost(\cdot,\cdot)$ accordingly.
}
The above notion, sometimes called a strong coreset, was proposed by \mycite{HM04},
following a weaker notion of \mycite{AHV04}. 
In recent years it has found many applications,
see the surveys of \mycite{AHV05}, \mycite{Phillips16} and \mycite{MS18},
and references therein.

The above coreset definition readily applies to ordered weighted clustering.
However, a standard coreset is constructed for a specific clustering objective,
i.e., a single weight vector $v\in\R_+^n$, which might limit its usefulness. 
The notion of a \emph{simultaneous} coreset,
introduced recently by \mycite{BachemLL18}, 
requires that all clustering objectives are preserved,
i.e., the $(1+\eps)$-approximation holds for all weight vectors 
in addition to all centers. 
This ``simultaneous'' feature is valuable in data analysis, 
since the desired weight vector might be application and/or data dependent,
and thus not known when the data reduction is applied. 
Moreover, since ordered weighted clustering includes classical clustering,
e.g., \kMedian and \kCenter as special cases,
all these different analyses may be performed on a single simultaneous coreset.

\subsection{Our Contribution}
\label{sec:contribution}

Our main result is (informally) stated as follows.
To simplify some expressions,
we use $O_{\eps,d}(\cdot)$ to suppress factors depending only on $\eps$ and $d$.
The precise dependence appears in the technical sections. 

\begin{theorem}[informal version of Theorem~\ref{thm:okm}]
  \label{thm:okm_informal}
  There exists an algorithm that,
  given an $n$-point data set $X \subset \R^d$ and $k\in[n]$,
  computes a simultaneous $\eps$-coreset of size $O_{\eps,d}(k^2 \log^2{n})$
  for \OkM.
\end{theorem}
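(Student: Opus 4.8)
The plan is to unpack the ``simultaneous over all weight vectors'' requirement into two successive reductions and then solve a purely geometric range-counting problem. First I would exploit the telescoping decomposition of a non-increasing weight vector: writing $v_{n+1}:=0$, we have $v=\sum_{p=1}^{n}(v_p-v_{p+1})\,\mathbf{1}_p$ where $\mathbf{1}_p$ is the indicator of the first $p$ coordinates, and all coefficients $v_p-v_{p+1}$ are nonnegative. Since $\cost_{\mathbf{1}_p}=\cost_p$ and $\cost_v$ is the corresponding nonnegative combination of the $\cost_p$, a multiset $X'$ that preserves every $\cost_p(\cdot,C)$ within $(1\pm\eps)$ automatically preserves every $\cost_v(\cdot,C)$ within $(1\pm\eps)$ (relative error is preserved under nonnegative linear combinations). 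The one bookkeeping requirement is that $X'$ preserve the total number of points, so that the index $p$ retains its meaning on the weighted coreset; I would enforce $\sum_{x'} w(x')=n$ throughout.

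The second reduction passes to a Lagrangian/threshold form of the $p$-centrum objective. For any $C$,
\[
  \cost_p(X,C)=\min_{t\ge 0}\Big\{\,p\,t+\textstyle\sum_{x\in X}(d(x,C)-t)_+\,\Big\},
\]
and the identical identity holds on the weighted set $X'$ because its total mass is $n$ and the $p\,t$ term is therefore unchanged. Hence it suffices to build $X'$ so that the \emph{excess functions} $g_t(C):=\sum_{x}(d(x,C)-t)_+$ are preserved within $(1\pm\eps)$ for \emph{all} centers $C$ ($|C|=k$) and \emph{all} thresholds $t\ge 0$: the lower bound $p t+(1-\eps)g_t\ge(1-\eps)(pt+g_t)$ and the upper bound evaluated at the optimal $t$ together give $(1\pm\eps)$ on $\cost_p$, for every $p$ at once. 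This single, weight-free target function $g_t$ is what makes the construction simultaneous.

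For the construction itself I would use a ring decomposition around an approximate solution, in the spirit of Chen's coreset. Compute an $O(1)$-approximate (bicriteria) solution $A$ for plain \kMedian; snapping points whose contribution is negligible relative to this cost reduces the effective aspect ratio to $\poly(n/\eps)$, so only $O_\eps(\log n)$ geometric threshold scales $t\in\{(1+\eps)^j\}$ are relevant. Partition $X$ into the $O(k\log n)$ groups $G_{a,j}=\{x:\ x\text{ assigned to }a\in A,\ d(x,a)\in[2^j,2^{j+1})\}$. Inside a single group all points lie at essentially one distance from their anchor $a$, so the group's contribution to $g_t(C)$ is controlled by \emph{how many} of its points fall outside the union of radius-$t$ balls around $C$; this is a counting query over the range space of complements of unions of $k$ balls in $\R^d$, whose VC/pseudo-dimension is $O_d(k)$. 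I would therefore replace each group by a weighted relative $\eps$-approximation of that range space, of size $O_{\eps,d}(k\log n)$, chosen to preserve the group's cardinality exactly. Summing over the $O(k\log n)$ groups yields a coreset of size $O_{\eps,d}(k^2\log^2 n)$ with total mass $n$, as required.

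The \textbf{main obstacle} is the simultaneity over the pair $(C,t)$ under a \emph{multiplicative} guarantee, and in particular gluing the per-group, per-scale approximations into one global $(1\pm\eps)$ bound on $g_t$. The dangerous regime is a handful of very distant points (large $t$, small count), where multiplicative preservation of a count is information-theoretically impossible; this must be neutralized by the aspect-ratio reduction, which guarantees that such tail contributions are a negligible fraction of the global objective and hence may be absorbed into the additive slack of a relative approximation, while the exact preservation of group cardinalities keeps the $p\,t$ terms honest. The remaining technical work is to verify that the clipped objective class $\{x\mapsto(d(x,C)-t)_+\}$ indeed has pseudo-dimension bounded in terms of $k$ and $d$ (licensing the relative approximations), and to push the additive errors through the integral representation $g_t(C)=\int_t^\infty |\{x:d(x,C)>s\}|\,\d s$ so that they aggregate into a clean $(1\pm\eps)$ factor rather than an uncontrolled additive term.
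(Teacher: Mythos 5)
Your first reduction (telescoping $v=\sum_{p}(v_p-v_{p+1})\mathbf{1}_p$ and preserving every $\cost_p$) is exactly the paper's Lemma~\ref{centogen}, and the Lagrangian identity $\cost_p(X,C)=\min_{t\ge 0}\{pt+\sum_{x}(d(x,C)-t)_+\}$ is a correct and genuinely different way to make the guarantee simultaneous over $p$; the paper instead discretizes $p$ to powers of $(1+\eps)$ via Lemma~\ref{logn} and intersects the resulting $O(\log n/\eps)$ interval partitions. The gap is in your construction step. A simultaneous coreset must in particular be an $\eps$-coreset for \kCenter (the case $p=1$, equivalently the regime of $t$ near $\max_x d(x,C)$), where $g_t(C)$ is determined by a constant number of extreme points. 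A relative $\eps$-approximation of the range space of complements of unions of $k$ balls gives multiplicative error only on ranges of measure at least $\delta$ and an additive error of order $\eps\delta|G|$ below that; with per-group size $O_{\eps,d}(k\log n)$ you are forced to take $\delta$ at least inverse-polynomial in $k\log n$ and $1/\eps$, so ranges capturing only a handful of points of a group, which are precisely the ranges that determine $\cost_p$ for small $p$, are not preserved. Your proposed neutralization does not close this: the aspect-ratio reduction makes the tail negligible relative to the \kMedian cost, but for $p=1$ the objective \emph{is} the tail. Concretely, take a group of many points spread on a sphere of radius $2^j$ around its anchor $a$; for a distant center $C$ the maximum of $d(x,C)$ over the group depends on the direction from $a$ to $C$, and preserving it for all $C$ requires an $\eps$-net of directions of size $(1/\eps)^{\Omega(d)}$, which no VC-based sample of the stated size provides.

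This is exactly why the paper's construction is deterministic and geometric rather than sampling-based: it projects $X$ onto $O((1/\eps)^d\cdot k\log n)$ lines through the union of $O(1)$-approximate centers for the $O(\log n/\eps)$ discretized values of $p$, partitions each line into sub-intervals controlled by cumulative-error and length thresholds (Lemma~\ref{lem:1d}), and intersects these partitions across all $p$ in $W$; the $(1/\eps)^{d+O(1)}$ factor it pays is stated to match a lower bound inherited from \kCenter. To repair your argument you would have to replace the per-group relative approximation by such a geometric discretization (an $\eps$-net of directions together with a one-dimensional interval decomposition along each direction), at which point you have essentially re-derived the paper's construction; the Lagrangian reformulation, while elegant, does not by itself remove the need for it.
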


Our main result is built on top of a coreset result for \pCentrum 
(the special case of \OkM in which the weight vector is $1$ in the first $p$ components and $0$ in the rest).
For this special case, we have an improved size bound,
that avoids the $O(\log^2 n)$ factor, stated as follows.
Note that this coreset is for a single value of $p$ (and not simultaneous).

\begin{theorem}[informal version of Theorem~\ref{thm:pcent}]
  \label{thm:pcent_informal}
  There exists an algorithm that,
  given an $n$-point data set $X \subset \R^d$ and $k,p\in [n]$,
  computes an $\eps$-coreset of size $O_{\eps, d}(k^2)$ for $\pCentrum$.
\end{theorem}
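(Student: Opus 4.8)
The plan is to turn the ``sum of the $p$ largest distances'' into a sum of per-point functions, and then build a movement-based (snapping) coreset for the resulting objective. The key identity is that the sum of the top $p$ of nonnegative reals equals $\min_{t\ge 0}(pt+\sum_i(a_i-t)_+)$, so for every center set $C$,
\[
  \cost_p(X,C) \;=\; \min_{t\ge 0}\Big( p\,t + \sum_{x\in X} \big(d(x,C)-t\big)_+ \Big).
\]
Interpreting the weighted coreset $X'$ as a multiset (weights as fractional multiplicities), the identical formula computes $\cost_p(X',C)$ with $\sum_{x\in X'}w(x)(d(x,C)-t)_+$ in place of the sum. Hence it suffices to construct $X'$ so that the \emph{clipped cost} $g_t(C):=\sum_{x}(d(x,C)-t)_+$ is preserved within $(1\pm\eps)$ for every $C$ (with $|C|=k$) and every threshold $t\ge 0$: since the $pt$ term is data-independent and nonnegative, a pointwise $(1\pm\eps)$-approximation of $g_t$ survives the outer minimization and yields $\cost_p(X',C)\in(1\pm\eps)\cost_p(X,C)$. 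This reduction is what lets us avoid reasoning directly about the non-separable top-$p$ operator.

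Next I would build $X'$ by snapping. Compute a constant-factor approximate \pCentrum solution $C^*$ (any $O(1)$-approximation suffices), and around each of its $k$ centers lay down exponentially growing rings; inside the ring at distance scale $\approx 2^i$ impose an axis-parallel grid of side $\approx \eps 2^i/\sqrt d$. Each point $x$ is moved to the representative of its cell, so that its displacement obeys $\|x-\tilde x\| \le \eps\, d(x,C^*)$, and $X'$ keeps one representative per occupied cell weighted by the number of points it absorbs. Because $u\mapsto (u-t)_+$ is $1$-Lipschitz, snapping changes each point's clipped contribution by at most $\|x-\tilde x\|\le \eps\, d(x,C^*)$, so $|g_t(C)-g^{X'}_t(C)|$ is at most $\eps$ times the sum of $d(x,C^*)$ over the points that actually contribute at threshold $t$.

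The crucial charging step is to bound this additive error by $\eps\cdot\cost_p(X,C)$. At the optimal threshold $t^\ast$ the points with $d(x,C)>t^\ast$ are exactly the top $p$ with respect to $C$; for any $p$-subset $S$ one has $\sum_{x\in S} d(x,C^*)\le \cost_p(X,C^*)$ (the cost of $C^*$ is the sum of its \emph{own} largest $p$ distances), and since $C^*$ is an $O(1)$-approximation, $\cost_p(X,C^*)\le O(1)\cdot\opt\le O(1)\cdot\cost_p(X,C)$. This is exactly why the reference solution must approximate \pCentrum rather than \kMedian: the relevant scale is the top-$p$ cost, which for small $p$ can be far below the full \kMedian cost, so charging against \kMedian would be lossy. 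For the size bound I would count occupied cells: a naive count over all $k$ centers and all active scales gives a Har-Peled--Mazumdar-type bound $O_{\eps,d}(k\log n)$, and to reach $O_{\eps,d}(k^2)$ I would use a $\log n$-free refinement (in the spirit of Har-Peled and Kushal) arguing that only $O(k)$ scales per center can be charged before the ring cost is dominated, collapsing the scale sum into an extra factor of $k$.

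I expect the hardest part to be controlling the \emph{boundary} points — those whose snapped image crosses the threshold, i.e.\ $d(x,C)\le t^\ast<d(\tilde x,C)$ or vice versa. Each such point perturbs $g_{t^\ast}$ by only $\le\eps\, d(x,C^*)$, but there may be many of them clustered just below the $p$-th distance, so naively their aggregate contribution can exceed $\cost_p(X,C)$. Handling this requires either perturbing the coreset's threshold to absorb them or a finer annulus argument showing their total movement is still charged to the top-$p$ cost. This is precisely where the threshold couples the points in a way absent from ordinary \kMedian coresets, and it is the technical heart of the construction.
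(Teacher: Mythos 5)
Your variational reduction $\cost_p(X,C)=\min_{t\ge 0}\bigl(pt+\sum_{x}(d(x,C)-t)_+\bigr)$ is a genuinely different (and appealing) starting point from the paper's, which never linearizes the top-$p$ operator this way; instead the paper projects $X$ onto $O(1/\eps)^d\cdot k$ lines through an $O(1)$-approximate \pCentrum solution, splits each line into $O(k/\eps)$ sub-intervals (length-bounded intervals for points outside the reference top-$p$, cumulative-error-bounded intervals inside it), and replaces each sub-interval by its weighted mean. However, your writeup has a genuine gap exactly where you flag it: the boundary points. At the threshold $\tilde t$ minimizing the coreset's objective, the set $\{x: d(x,C)>\tilde t\ge d(\tilde x,C)\}$ can contain far more than $p$ points (e.g.\ $n-p$ points sitting at distance $\tilde t+\delta$), and summing their per-point perturbations gives an additive error on $g_{\tilde t}$ of order $n\cdot\eps\,\opt/p$, which is not $O(\eps\,\opt)$. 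The true $\cost_p$ error is still small in such examples, but only because $p\tilde t+g_{\tilde t}(X,C)$ overshoots $\cost_p(X,C)$ by a compensating amount --- so your ``pointwise approximation of $g_t$ survives the minimization'' step does not go through with the additive bound you actually establish. The paper's resolution is precisely Lemma~\ref{lonelinf}: $|\topp_p(X)-\topp_p(Y)|\le p\max_{i\in S}|x_i-y_i|+\sum_{i\notin S}|x_i-y_i|$, which caps the contribution of the (possibly numerous) uniformly-perturbed non-top-$p$ points at $p$ times the maximum perturbation $O(\eps\,\opt/p)$; making this applicable requires the two distinct splitting rules and the intermediate set $Z$ that agrees with $X$ on intervals straddling the center or the top-$p$ boundary. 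You would need to import an argument of this type (or a perturbed-threshold argument you have not supplied) to close the gap.

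A second, smaller gap is the size bound. Exponential rings with $\eps$-grids around $k$ centers give $O_{\eps,d}(k\log\Phi)$ cells (spread- or $n$-dependent), and you defer the removal of the logarithm to an unspecified ``Har-Peled--Kushal-type refinement.'' That refinement is not a counting trick on grids; it is the line-projection construction itself (project onto $O(k/\eps^d)$ lines, then do one-dimensional interval merging with $O(k/\eps)$ pieces per line), which is what the paper executes to get $O(k^2/\eps^{d+1})$. So the two hard parts of the theorem --- the threshold coupling and the $\log$-free size --- are both named but not proved in your proposal. Your charging observations (any $p$-subset's cost under $C^*$ is at most $\cost_p(X,C^*)\le O(1)\cdot\cost_p(X,C)$, and the necessity of a \pCentrum rather than \kMedian reference solution) are correct and match the paper's.
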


The size bounds in the two theorems are nearly tight. 
The dependence on $n$ in Theorem~\ref{thm:okm_informal} is unavoidable,
because we can show that the coreset size has to be $\Omega(\log n)$,
even when $k=d=1$ (see Theorem~\ref{thm:lb}).
For both Theorem~\ref{thm:okm_informal} and Theorem~\ref{thm:pcent_informal}, the hidden dependence on $\eps$ and $d$ is $(\frac{1}{\eps})^{d+O(1)}$.
This factor matches known lower bounds [D.~Feldman, private communication]
and state-of-the-art constructions of coresets for \kCenter
(which is a special case of \OkM)~\cite{DBLP:journals/algorithmica/AgarwalP02}.

A main novelty of our coreset is that it preserves the objective for all weights ($v\in\R_+^n$ in the objective function) simultaneously. 
It is usually easy to combine coresets for two data sets,
but in general it is not possible to combine coresets for two different objectives.
Moreover, even if we manage to combine coresets for two objectives,
it is still nontrivial to achieve a small coreset size for
infinitely many objectives (all possible weight vectors $v\in\R_+^n$).
See the overview in Section~\ref{sec:intro_tech}
for more details on the new technical ideas needed to overcome these obstacles.

We evaluate our algorithm on a real 2-dimensional geographical data set with about 1.5 million points.
We experiment with the different parameters for coresets of \pCentrum,
and we find out that the empirical error is always far lower
than our error guarantee $\eps$. 
As expected, the coreset is much smaller than the input data set,
leading to a massive speedup (more than 500 times) in the running time
of computing the objective function.
Perhaps the most surprising finding is that a single \pCentrum coreset 
(for one ``typical'' $p$) empirically serves as a simultaneous coreset,
which avoids the more complicated construction
and the dependence on $n$ in Theorem~\ref{thm:okm_informal},
with a coreset whose size is only 1\% of the data set. 
Overall, the experiments confirm that our coreset is practically efficient,
and moreover it is suitable for data exploration,
where different weight parameters are needed.

\subsection{Overview of Techniques}
\label{sec:intro_tech}

We start with discussing Theorem~\ref{thm:pcent_informal}
(which is a building block for Theorem~\ref{thm:okm_informal}). 
Its proof is inspired by~\mycite{har2007smaller},
who constructed coresets for \kMedian clustering in $\R^d$
by reducing the problem to its one-dimensional case. 
We can apply a similar reduction,
but the one-dimensional case of \pCentrum is significantly different
from \kMedian.
One fundamental difference is that the objective counts only the $p$ largest distances, hence the subset of ``contributing'' points depends on the center.
We deal with this issue by introducing a new bucketing scheme
and a charging argument that relates the error to the $p$ largest distances.
See Section~\ref{sec:toy} for more details. 

The technical difficulty in Theorem~\ref{thm:okm_informal} is two-fold:
how to combine coresets for two different weight vectors,
and how to handle infinitely many weight vectors. 
The key observation is that every \OkM objective can be represented
as a linear combination of \pCentrum objectives (see Lemma~\ref{centogen}).
Thus, it suffices to compute a simultaneous coreset for \pCentrum
for all $p \in [n]$.
We achieve this by ``combining'' the individual coresets for all $p \in [n]$,
while crucially utilizing the special structure of our construction 
of a \pCentrum coreset, but unfortunately losing an $O(\log n)$ factor in the coreset size. 
In the end, we need to ``combine'' the $n$ coresets for all $p\in[n]$, 
but we can avoid losing an $O(n)$ factor by discretizing the values of $p$,
so that only $O(\log n)$ coresets are combined,
The result is a simultaneous coreset of size $O_{\eps, d}(\log^2 n)$,
see Section~\ref{sec:simul} for more details.

\subsection{Related Work}
\label{sec:related}

The problem of constructing strong coresets for \kMeans, \kMedian, and other objectives has received significant attention from the research community \cite{Feldman:2010:CSH:1873601.1873654,Feldman:2011:UFA:1993636.1993712,DBLP:conf/soda/LangbergS10,Badoiu:2002:ACV:509907.509947,Chen:2009:CKK:1655307.1655320}.
For example, \mycite{HM04}
designed the first strong coreset for \kMeans.
\mycite{Feldman:2013:TBD:2627817.2627920} provided coresets for \kMeans,
PCA and projective clustering that are independent of the dimension.
Recently, \mycite{DBLP:conf/focs/SohlerW18} generalized the results of \mycite{Feldman:2013:TBD:2627817.2627920} and obtained strong coresets for \kMedian and for subspace approximation that are independent of the dimension $d$.

\OkM and its special case \pCentrum generalize \kCenter
and are thus APX-hard even in $\mathbb{R}^2$ \cite{MS84}. 
However, \pCentrum may be solved optimally in polynomial time for special cases such as lines and trees~\cite{tamir01}. The first provable approximation algorithm for \OkM was proposed by~\mycite{aouad2018ordered}, and they gave $2$-approximation for trees and $O(\log n)$-approximation for general metrics. The approximation ratio for general metrics was drastically improved to $38$
by~\mycite{byrka2018constant}, improved to $18+\epsilon$
by~\mycite{chakrabarty2017interpolating}, and finally a $(5+\epsilon)$-approximation was obtained very recently by~\mycite{chakrabarty2018approximation}.

Previous work on fairness in clustering
has followed the disparate impact doctrine of \mycite{FFMSV15},
and addressed fairness with respect to protected classes,
where each cluster in the solution should fairly represent every class. 
\mycite{DBLP:conf/nips/Chierichetti0LV17}
have designed approximation algorithms for \kCenter and \kMedian,
and their results were refined and extended
by~\mycite{RS18} and~\mycite{BCN19}.
Recent work by \mycite{SSS18} designs coresets for fair \kMeans clustering.
However, these results are not applicable to ordered weighted clustering.

 \section{Preliminaries}
\label{sec:prelims}

Throughout this paper we use capital letters other than $I$ and $J$
to denote finite subsets of $\R^d$.
We recall some basic terminology from \cite{har2007smaller}. 
For a set $Y\subset \mathbb{R}$,
define its \emph{mean point} to be
\begin{equation} \label{eq:mu}
  \mu(Y):=\frac{1}{|Y|}\sum_{y\in Y} y,
\end{equation}
and its \emph{cumulative error} to be
\begin{equation} \label{eq:delta}
  \delta(Y):=\sum_{y\in Y} |y-\mu(Y)|.
\end{equation}
Let $I(Y):=[\inf Y,\sup Y]$ denote the smallest closed interval containing $Y$.
The following facts from \cite{har2007smaller} will be useful in our analysis.

\begin{lemma} \label{basicfact}
  For every $Y\subset \mathbb{R}$ and $z\in \mathbb{R}$, 
\begin{itemize} 
\item $\sum_{y\in Y} \Big| |z-y|-|z-\mu(Y)| \Big| \leq \delta(Y)$; and 
\item if $z\notin I(Y)$ then $\sum_{y\in Y} |y-z|=|Y|\cdot |\mu(Y)-z|$.
\end{itemize}
\end{lemma}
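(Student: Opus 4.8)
The plan is to prove the two bullets separately, as both reduce to elementary properties of the absolute value on $\mathbb{R}$, and neither requires more than a line or two once the right elementary fact is invoked. I would write $\mu := \mu(Y)$ throughout to lighten notation.

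For the first inequality, the key observation I would use is the reverse triangle inequality: for any reals $a,b,z$ we have $\big| |z-a| - |z-b| \big| \le |a-b|$, which follows from $|z-a| \le |z-b| + |b-a|$ together with the symmetric bound. Applying this with $a=y$ and $b=\mu$ gives, for each $y \in Y$, the pointwise estimate $\big| |z-y| - |z-\mu| \big| \le |y - \mu|$. Summing over $y \in Y$ and recalling the definition $\delta(Y) = \sum_{y\in Y} |y-\mu|$ from \eqref{eq:delta} yields the claim at once. I would emphasize that the resulting bound is independent of $z$, since this uniformity is precisely what makes the fact useful in the later coreset analysis.

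For the second statement, the hypothesis $z \notin I(Y) = [\inf Y, \sup Y]$ means $z$ lies strictly to one side of the entire set $Y$, so I would split into two cases. If $z < \inf Y$, then $y - z > 0$ for every $y \in Y$, hence $|y-z| = y-z$; moreover $\mu \ge \inf Y > z$, so $|\mu - z| = \mu - z$. Consequently $\sum_{y\in Y} |y-z| = \sum_{y\in Y}(y-z) = |Y|\,\mu - |Y|\,z = |Y|\cdot|\mu - z|$, where I use $\sum_{y\in Y} y = |Y|\,\mu$ from the definition of the mean in \eqref{eq:mu}. The case $z > \sup Y$ is entirely symmetric, with every sign reversed, giving the same identity.

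I do not expect any genuine obstacle here: the first part is just the statement that $t \mapsto |z-t|$ is $1$-Lipschitz, and the second is that this same function is affine on each ray outside $I(Y)$, so that the sum of distances collapses to $|Y|$ times the distance from $z$ to the centroid. The only care needed is to verify that $\mu$ itself lies on the correct side of $z$ in each case, which is immediate since $\mu \in I(Y)$.
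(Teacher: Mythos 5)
Your proof is correct. Note that the paper itself gives no proof of this lemma---it is imported directly from Har-Peled and Kushal (cited as \cite{har2007smaller})---so there is nothing to compare against; your argument (the reverse triangle inequality summed over $Y$ for the first bullet, and the observation that $t\mapsto|z-t|$ is affine on either side of $z$ together with $\sum_{y\in Y}y=|Y|\mu(Y)$ for the second) is the standard one and is complete.
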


It will be technically more convenient to treat a coreset
as a point set $X'\subset\R^d$ associated with integer weights $w:X'\to\mathbb{N}$, 
which is equivalent to a multiset (with weights representing multiplicity),
and thus the notation of $\cost_v(X', C)$ in \eqref{eq:costv} is well-defined.
(These weights $w$ are unrelated to the predefined weights $\{v_i\}$.)
While our algorithm always produces $X'$ with integral weights $w$,
our proof requires fractional weights during the analysis,
and thus we extend \eqref{eq:mu} and \eqref{eq:delta} 
to a point set $Y$ with weights $w: Y \rightarrow \R_+$ by defining
\begin{align*}
  \mu_w(Y)
  &:= \frac{1}{\sum_{y \in Y}{w(y)}}\sum_{y\in Y}{ w(y) \cdot y },
  \\
  \delta_w(Y)
  &:= \sum_{y \in Y}{w(y) \cdot |y - \mu(Y)|}.
\end{align*}

We will use the fact that in one-dimensional Euclidean space, 
\pCentrum can be solved (exactly) in polynomial time by dynamic programming,
as shown by~\mycite{tamir01}.

\begin{lemma}[\cite{tamir01}]\label{exactsol}
  There is a polynomial-time algoritm that,
  given a set of one-dimensional points $X=\{x_1,\ldots,x_n\}\subset\R$
  and parameters $k,p\in[n]$,
  computes a set of $k$ centers $C\subset\R^d$
  that minimizes $\cost_p(X,C)$. 
\end{lemma}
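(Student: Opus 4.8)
The plan is to linearize the ``top-$p$'' objective by a threshold (Lagrangian) reformulation, and then solve the resulting \emph{additively separable} clustering problem by a standard one-dimensional dynamic program. The starting point is the elementary identity that for any nonnegative reals $a_1,\dots,a_n$, the sum of the $p$ largest of them equals
\[
  \min_{\tau \ge 0} \Big( p\,\tau + \sum_{i=1}^n (a_i-\tau)_+ \Big),
  \qquad (t)_+ := \max(0,t),
\]
with the minimum attained at $\tau$ equal to the $p$-th largest value. Applying this with $a_i = d(x_i,C)$ and swapping the two minimizations (a minimum over the product domain) gives
\[
  \min_{|C|=k} \cost_p(X,C)
  = \min_{\tau \ge 0} \Big[\, p\,\tau + \min_{|C|=k} \sum_{i=1}^n \big(d(x_i,C)-\tau\big)_+ \Big].
\]
The payoff is that, for a \emph{fixed} threshold $\tau$, the inner objective $\sum_i (d(x_i,C)-\tau)_+$ is a sum of per-point costs, each a nondecreasing function of the service distance $|x_i - c(x_i)|$. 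Unlike the raw \pCentrum objective, which keeps only the $p$ largest distances and is therefore not additive across clusters, this capped objective decomposes additively over the clusters.

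Next I would fix $\tau$ and sort the points so that $x_1 \le \cdots \le x_n$. A routine exchange argument shows that, for the capped objective, some optimal solution assigns to each center a contiguous block of points (if the assignments of two clusters interleave on the line, uncrossing them does not increase any of the nondecreasing per-point costs). This justifies an interval dynamic program: let $D[i][j]$ be the minimum capped cost of serving $x_1,\dots,x_i$ with $j$ centers, initialized by $D[0][0]=0$, with transition
\[
  D[i][j] = \min_{0 \le i' < i} \big(\, D[i'][j-1] + g(i'{+}1,\, i)\,\big),
\]
where $g(a,b) := \min_{c \in \R} \sum_{\ell=a}^{b} (|x_\ell - c| - \tau)_+$ is the best single-center cost for the block $x_a,\dots,x_b$. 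Each $g(a,b)$ is easily computed, since $c \mapsto \sum_\ell (|x_\ell-c|-\tau)_+$ is piecewise linear with slope changes only at the $O(n)$ breakpoints $\{x_\ell\} \cup \{x_\ell \pm \tau\}$, so it suffices to evaluate these candidate positions. The value of \pCentrum for this $\tau$ is then $p\,\tau + D[n][k]$, and the whole table has $\poly(n)$ entries, each filled in $\poly(n)$ time.

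It then remains to minimize over $\tau$ using only polynomially many candidates. Let $C^\star$ be an optimal \pCentrum solution and $\tau^\star := d(x_{(p)},C^\star)$ its $p$-th largest service distance; by the identity, plugging $\tau=\tau^\star$ into the reformulation already yields a value at most $\cost_p(X,C^\star)$, while for every $\tau$ the reformulation yields a value at least $\min_C \cost_p(X,C)$. Hence the outer minimum equals the optimum and is attained at $\tau^\star$. Standard arguments then restrict the relevant values of $\tau^\star$ (and of the centers realizing it) to a $\poly(n)$-size set, e.g.\ the interpoint distances $\{|x_i-x_j|\}$ together with the breakpoints above; scanning this set and returning the best $p\,\tau + D[n][k]$ gives the exact optimum, and the center set recovered from the corresponding dynamic program is optimal. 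The one genuinely \pCentrum-specific obstacle is exactly the non-additivity created by the top-$p$ selection: without it this is a textbook one-dimensional clustering dynamic program, and the threshold identity is precisely what removes it, after which contiguity of optimal clusters and finiteness of the candidate set make the argument go through. The only remaining technical care is pinning down that candidate set cleanly, so as to avoid the mild circularity that the block breakpoints $x_\ell \pm \tau$ themselves depend on $\tau$.
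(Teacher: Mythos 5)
First, a framing remark: the paper does not prove this lemma at all --- it is imported as a black box from Tamir (2001) --- so there is no in-paper argument to compare against, and your proposal must stand on its own. Most of it does. The identity $\topp_p(a)=\min_{\tau\ge 0}\bigl(p\tau+\sum_i(a_i-\tau)_+\bigr)$, the exchange of the two minimizations over the product domain, and the interval dynamic program for the resulting additively separable capped objective are all correct; this linearization is indeed the standard device in the $k$-centrum literature. For a \emph{fixed} $\tau$ the argument is airtight: nearest-center cells on a line are contiguous, $c\mapsto\sum_\ell(|x_\ell-c|-\tau)_+$ is convex piecewise linear with breakpoints in $\{x_\ell\}\cup\{x_\ell\pm\tau\}$, and the table has $\poly(n)$ entries.

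The genuine gap is the step you defer to ``standard arguments'': exhibiting a $\poly(n)$ candidate set for $\tau$. This is not a loose end --- it is the only nontrivial \pCentrum-specific structure the lemma needs --- and the set you propose (the interpoint distances $\{|x_i-x_j|\}$) already fails in the simplest case: for $k=p=1$ the problem is $1$-center, the optimal center is $(\min X+\max X)/2$, and $\tau^\star=(\max X-\min X)/2$ is \emph{half} an interpoint distance. More fundamentally, enumerating $\tau^\star=d(x_{(p)},C^\star)$ requires knowing where optimal centers of the \emph{original} (uncapped) problem can lie, and the naive vertex argument on the piecewise-linear function $\cost_p(X,\cdot)$ only pins each $c_j$ down via systems of active tie constraints of the form $c_j=x_i$, $c_j+c_{j'}=2x_i$, or $\pm c_j\mp c_{j'}=x_i-x_{i'}$; chains of such constraints couple several centers, so it is not automatic that each $c_j$ is confined to a $\poly(n)$ set. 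What is missing is precisely the structural lemma Tamir proves: some optimal solution has all centers in an explicit $\poly(n)$ set (for trees, the vertices together with points equidistant from pairs of vertices, which in one dimension are data points and midpoints of pairs of data points). Granting that, $\tau$ ranges over the $O(n^3)$ values $|x_\ell-c|$ for candidate $c$, the circularity you flag disappears, and your argument closes. Without it, the outer enumeration is not justified and the proof as written is incomplete.
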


 \section{The Basic Case: \pCentrum for $k=d=1$ (one facility in one-dimensional data)}
\label{sec:toy}

In this section we illustrate our main ideas by constructing a coreset for
\pCentrum in the special case of one facility in one-dimensional Euclidean space (i.e., $k=d=1$).
This is not a simultaneous coreset, but rather for a single $p$.
The key steps of our construction described below
will be repeated, with additional technical complications,
also in the general case of \pCentrum, i.e., $k$ facilities in dimension $d$.

\ifeightpage
We will need two technical lemmas, whose proof can be found in the full version.
\else
We will need two technical lemmas,
whose proofs appear in Section~\ref{sec:basic_proofs}.
\fi
The first lemma bounds $\delta(Y)$
by the cost of connecting $Y$ to an arbitrary point outside $I(Y)$
(which in turn is part of the objective in certain circumstances).

\begin{lemma}\label{cum}
  Let $Y\subset \R$ be a set with (possibly fractional) weights $w:Y\to\R_+$. 
  Then for every $z\in\R$ such that $z\not\in I(Y)$ or $z$ is an endpoint of $I(Y)$,   $$
  \delta_w(Y)\leq 2\sum_{y\in Y} w(y) \cdot |y-z|.
  $$
\end{lemma}

Recall that $k=1$, hence the cost in an instance of \pCentrum
is the sum of the $p$ largest distances to the center.
In the analysis of our coreset it will be useful to replace some points
of the input set $X$ with another set $Y$.
The second lemma will be used to bound the resulting increase in the cost;
it considers two sequences, denoted $X$ and $Y$, of the connection costs,
and bounds the difference between the sum of the $p$ largest values in $X$
and that in $Y$ by a combination of $\ell_{\infty}$ and $\ell_1$ norms.

\begin{lemma}\label{lonelinf}
  Let $X=(x_1,\ldots,x_n)$ and $Y=(y_1,\ldots,y_n)$ be two sequences of real numbers.
  Then for all $S\subseteq [n]$,
  $$
  |\topp_p(X)-\topp_p(Y)|
  \leq
  p\max_{i\in S} |x_i-y_i|+\sum_{i\in [n]\setminus S} |x_i-y_i|,
  $$
  where $\topp_p(Z)$ is the sum of the $p$ largest numbers in $Z$.
\end{lemma}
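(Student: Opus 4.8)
The plan is to use the variational characterization of $\topp_p$ as a maximum of linear functionals over index sets of size $p$, and then compare the two optima by feeding the maximizer for one sequence into the objective of the other. Concretely, I would first record the identity
$$
  \topp_p(Z)=\max_{T\subseteq[n],\,|T|=p}\ \sum_{i\in T} z_i ,
$$
which holds because the sum of the $p$ largest entries is attained by taking $T$ to index exactly those entries, and no other size-$p$ set can achieve a larger sum. This reduces the whole statement to a comparison of two maxima of the same form.

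Next I would assume without loss of generality that $\topp_p(X)\ge\topp_p(Y)$ (the reverse case is identical after swapping the roles of $X$ and $Y$, which leaves the right-hand side unchanged). Let $T^\ast$ attain the maximum for $X$, so that $\topp_p(X)=\sum_{i\in T^\ast} x_i$. Since $T^\ast$ is itself a feasible size-$p$ index set for $Y$, the variational formula gives $\topp_p(Y)\ge\sum_{i\in T^\ast} y_i$, and hence
$$
  0\le \topp_p(X)-\topp_p(Y)
  \le \sum_{i\in T^\ast}(x_i-y_i)
  \le \sum_{i\in T^\ast}|x_i-y_i| .
$$
This is the key move: substituting the $X$-maximizer into the $Y$-objective converts a difference of maxima into a single sum over $T^\ast$, which I can then control directly.

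Finally I would partition $T^\ast$ according to $S$, writing the sum as the contribution of $T^\ast\cap S$ plus that of $T^\ast\setminus S$. On the first part I bound each term by $\max_{i\in S}|x_i-y_i|$ and use $|T^\ast\cap S|\le |T^\ast|=p$; on the second part I simply enlarge the index set from $T^\ast\setminus S$ to all of $[n]\setminus S$. This yields
$$
  \sum_{i\in T^\ast}|x_i-y_i|
  \le p\max_{i\in S}|x_i-y_i|+\sum_{i\in[n]\setminus S}|x_i-y_i| ,
$$
which is exactly the claimed bound; combined with the symmetric case it controls $|\topp_p(X)-\topp_p(Y)|$.

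I do not expect any genuinely hard step here. The only points that require care are stating the variational formula for $\topp_p$ correctly and observing that the maximizer $T^\ast$ for one sequence is always an admissible competitor for the other; everything after that is the routine partition of $T^\ast$ along $S$ together with the symmetry argument to pass from the one-sided bound to the absolute value.
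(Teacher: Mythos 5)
Your proof is correct and follows essentially the same route as the paper's: both use that $\topp_p$ is a maximum over size-$p$ index sets, plug the maximizer for one sequence into the other to reduce to a single sum $\sum_{i\in T^\ast}|x_i-y_i|$, and then split that sum along $S$ (bounding the $T^\ast\cap S$ part by $p\max_{i\in S}|x_i-y_i|$ and enlarging the rest to $[n]\setminus S$). No gaps.
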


\paragraph{Outline of the Coreset Construction}
In the context of a one-dimensional point set $X\subset \R$, 
the term \emph{interval} will mean a subset of $X$
that spans a contiguous subsequence under a fixed ordering of the points, 
i.e., a subset $\{x_i,\ldots,x_j\}$
when the points in $X$ are ordered as $x_1 \leq \ldots \leq x_n$.
Informally, our coreset construction works as follows. 
First, use Lemma~\ref{exactsol} to find an optimal center $y^*$,
its corresponding optimal cost $\opt$,
and a subset $P\subset X$ of size $|P|=p$ that contributes to the optimal cost.
Then partition the data into three intervals, namely $X=L\cup R\cup Q$,
as follows.
Points from $P$ that are smaller or equal to $y^*$ are placed in $L$, 
points from $P$ that are larger than $y^*$ are placed in $R$,
and all other points are placed in $Q=X\setminus P$.
Now split $L$, $Q$ and $R$ into sub-intervals,
in a greedy manner that we describe below,
and represent the data points in each sub-interval
by adding to the coreset a single point, whose weight
is equal to the number of data points it replaces.
See Figure~\ref{fig:toy} for illustration.

\begin{figure*}[ht]
	\captionsetup{font=small}
	\centering
	\begin{subfigure}[b]{0.48\linewidth}
		\includegraphics[width=0.95\linewidth]{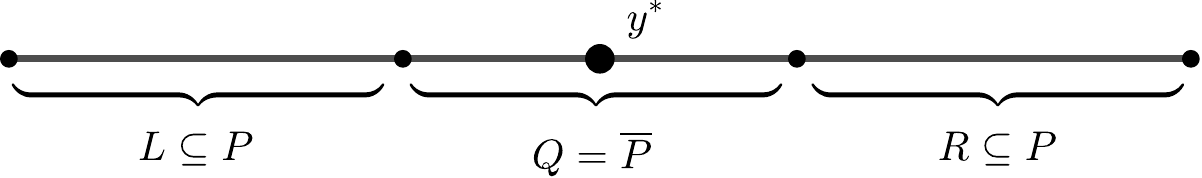}
	\end{subfigure}
	\quad
	\begin{subfigure}[b]{0.48\linewidth}
		\includegraphics[width=0.95\linewidth]{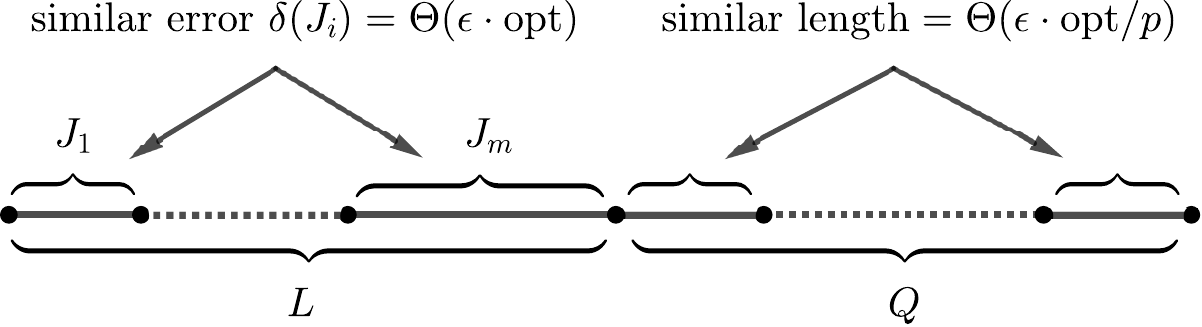}
	\end{subfigure}
	\caption{Coreset construction for \pCentrum with $k=1$ facilities in dimension $d=1$.
		The left figure depicts the partition of the data into $X=(L\cup R)\cup Q$,
		where $P=L\cup R$ contains the $p$ furthest points from an optimal center $y^*$.
		The right figure shows the different manners of splitting $L$ and $Q$ into intervals.
	}
	\hrulefill
	\label{fig:toy}
\end{figure*}

To split $L$ into sub-intervals,
scan its points from the smallest to the largest
and pack them into the same sub-interval $J$ as long as their cumulative error $\delta(J)$
is below a threshold set to $\Theta({\varepsilon \cdot \opt})$. 
This ensures, by Lemma~\ref{cum},
a lower bound on their total connection cost to the optimal center $y^*$,
which we use to upper bound the number of such intervals
(which immediately affects the size of the coreset)
by $O\left({1\over \epsilon}\right)$.
The split of $R$ is done similarly.
To split $Q=X\setminus P$,
observe that the distance from every $q\in Q$ to the center $y^*$
is less than $\frac{\opt}{p}$,
hence the diameter of $Q$ is less than $\frac{2\opt}{p}$,
and $Q$ can be partitioned into $O(\frac{1}{\varepsilon})$
sub-intervals of length $O(\frac{\varepsilon\opt}{p})$.
Observe that the construction for $Q$ differs from that of $L$ and $R$.

Let $D$ denote the coreset resulting from the above construction.
To prove that the resulting coreset has the desired error bound
for every potential center $y\in\R$, 
we define an intermediate set $Z$ that contains a mix of points from $X$ and $D$.
We stress that $Z$ depends on the potential center $y\in\R$, 
which is possible because $Z$ is used only in the analysis. 
The desired error bound follows by bounding both
$|\cost(Z, y) - \cost(X, y)|$ and $|\cost(Z, y) - \cost(D, y)|$,
(here we use Lemma~\ref{lonelinf}), 
and applying the triangle inequality.

\paragraph{Detailed Construction and Coreset Size}

We now give a formal description of our coreset construction.
Let $X=\{x_1,\ldots,x_n\}\subset \R$ be the input data set,
and recall that $\cost_p(X,y)$ for a point $y\in \R$
is the sum of the $p$ largest numbers in $\{|x_1-y|,\ldots,|x_n-y|\}$.
Denote the optimal center by $y^*:=\mathrm{argmin}_{y\in \R} \cost_p(X,y)$, and the corresponding optimal cost by $\opt:=\cost_p(X,y^*)$.
By Lemma~\ref{exactsol},
$y^*$ and $\opt$ can be computed in polynomial time.
Next, sort $X$ by distances to $y^*$.
For simplicity, we shall assume the above notation for $X$
is already in this sorted order, i.e.,
$|x_1-y^*|\geq \cdots\geq |x_n-y^*|$.
Thus, $\cost_p(X,y^*)=\sum_{i=1}^p |x_i-y^*|$.

Let $P := \{x_1,\ldots,x_p\}$, $L := \{x_i\leq y^*: x_i\in P\}$, $R := \{x_i> y^*: x_i\in P\}$ and $Q := X\setminus P$.
By definition, $X$ is partitioned into $L$, $Q$ and $R$,
which form three intervals located from left to right.
We now wish to split $L$, $Q$ and $R$ into sub-intervals,
and then we will add to $D$ the mean of the points in each sub-interval,
with weight equal to the number of such points. 

Split $L$ into sub-intervals from left to right greedily,
such that the cumulative error of each interval $J$ does not exceed $\frac{2\varepsilon \cdot\opt}{21}$,
and each sub-intervals is maximal, i.e., the next point cannot be added to it.
Split $R$ into sub-intervals similarly but from right to left.
We need to bound the number of sub-intervals produced in this procedure.

For sake of analysis,
we consider an alternative split of $L$ that is fractional, 
i.e., allows assigning a point fractionally to multiple sub-intervals, 
say $1/3$ to the sub-interval to its left
and $2/3$ to the sub-interval to its right.
The advantage of this fractional split is that all but the last sub-interval
have cumulative error \emph{exactly}  $\frac{2\varepsilon \cdot\opt}{21}$. 
We show in Lemma~\ref{lem:comp_frac_int} that the number of sub-intervals  produced in the original integral split
is at most twice that of the fractional split,
and thus it would suffice to bound the latter by $O(\frac{1}{\varepsilon})$.

\begin{lemma}
  \label{lem:comp_frac_int}
  The number of sub-intervals in the integral split is at most twice than that of the fractional split.
\end{lemma}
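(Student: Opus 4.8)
The plan is to compare the two greedy processes span-by-span, using a single structural fact about cumulative error that I would establish first. Write $\tau:=\frac{2\varepsilon\cdot\opt}{21}$ for the threshold. The fact I need is \emph{monotonicity}: if $A\subseteq B$ are two contiguous sub-intervals of the sorted points of $L$ (allowing fractional weights at the two ends, as the fractional split produces), then $\delta(A)\le\delta(B)$. I would prove this by growing $A$ into $B$ one increment of mass at a time at either end and checking that each increment cannot decrease $\delta$. For adding weight $w$ at a location $z\ge\max(\supp A)$, set $g(c):=\sum_{y\in A}w(y)\,|y-c|$, which is convex with slope in $[-W,W]$, where $W$ is the total weight of $A$. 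Then $\delta(A\cup\{z\})=g(\mu')+w(z-\mu')$ with $\mu'$ the new mean, and a direct computation gives $W(\mu'-\mu(A))=w(z-\mu')$; combining this with $g(\mu')-g(\mu(A))\ge -W(\mu'-\mu(A))$ (convexity plus the slope bound, since $\mu'\ge\mu(A)$) yields $\delta(A\cup\{z\})\ge\delta(A)$. Additions on the left are symmetric.

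Next I would record the \emph{consecutive-pair bound}: for any two adjacent integral buckets $I_i,I_{i+1}$ with $I_i$ not the last bucket, $\delta(I_i\cup I_{i+1})>\tau$. This is immediate from the stopping rule: maximality of $I_i$ means that appending the first point $z$ of $I_{i+1}$ already overflows, i.e.\ $\delta(I_i\cup\{z\})>\tau$, and since $I_i\cup I_{i+1}$ is a contiguous super-interval of $I_i\cup\{z\}$, monotonicity gives $\delta(I_i\cup I_{i+1})\ge\delta(I_i\cup\{z\})>\tau$.

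With these in hand the counting is a short pigeonhole. Parametrize $L$ on a mass axis $[0,|L|]$, with the $i$-th point occupying $[i-1,i]$, so the integral buckets are $[c_{i-1},c_i]$ with integer boundaries $0=c_0<\cdots<c_s=|L|$, and the fractional buckets are $[\beta_{j-1},\beta_j]$ with $0=\beta_0<\cdots<\beta_m=|L|$ and $\delta([\beta_{j-1},\beta_j])\le\tau$ for every $j$ (with equality for $j<m$). Each integral left-boundary $c_0,\ldots,c_{s-1}$ lies in exactly one half-open fractional span $[\beta_{j-1},\beta_j)$. I claim at most two of them lie in any one span: if three boundaries $c_a<c_{a+1}<c_{a+2}$ all lay in $[\beta_{j-1},\beta_j)$, then $I_{a+1}\cup I_{a+2}=[c_a,c_{a+2}]\subseteq[\beta_{j-1},\beta_j]$, so monotonicity gives $\delta(I_{a+1}\cup I_{a+2})\le\delta([\beta_{j-1},\beta_j])\le\tau$, contradicting the consecutive-pair bound (note $I_{a+1}$ is not the last bucket, since $I_{a+3}$ exists). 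Summing over the $m$ fractional spans bounds the number of integral left-boundaries, hence the number $s$ of integral buckets, by $2m$.

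The main obstacle is the monotonicity lemma, together with the bookkeeping that makes the containment $I_{a+1}\cup I_{a+2}\subseteq[\beta_{j-1},\beta_j]$ rigorous once the fractional split cuts points at non-integer positions. Phrasing everything on the mass axis, rather than on point indices, is exactly what keeps the set-containment (and hence the application of monotonicity) clean; the consecutive-pair bound and the pigeonhole step are then routine.
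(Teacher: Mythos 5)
Your proof is correct, but it is structured quite differently from the paper's. The paper argues by induction on the number $t$ of fractional sub-intervals: the first $2t$ integral buckets contain at least as many points as the first $t$ fractional ones, because each fractional bucket (whose only fractional points are its two endpoints) can be broken into two integral pieces each with cumulative error at most the threshold $\tau$, and the greedy integral split dominates any such valid split. You instead prove two structural facts --- monotonicity of $\delta$ under contiguous (mass-axis) containment, and the consecutive-pair lower bound $\delta(I_i\cup I_{i+1})>\tau$ coming from maximality --- and then finish by pigeonhole: no fractional span of error at most $\tau$ can contain three integral left-boundaries, so $s\le 2m$. Both arguments hinge on the same monotonicity of cumulative error; the paper merely asserts it (``the cumulative error is monotone in adding new points''), whereas you actually prove it via the convexity/slope argument, which makes your write-up more self-contained. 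Your mass-axis parametrization also cleanly handles the comparison between an integral union and a fractional bucket, a point the paper's induction glosses over. The paper's induction is shorter once monotonicity is granted; your version is global rather than inductive and makes explicit exactly where maximality of the greedy buckets is used. One tiny indexing remark: to invoke the consecutive-pair bound on $I_{a+1}\cup I_{a+2}$ you only need $I_{a+2}$ to exist (so that $I_{a+1}$ is not the last bucket), which already follows from $c_{a+2}$ being a left boundary; citing the existence of $I_{a+3}$ is more than you need but harmless.
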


\ifeightpage
\begin{proof}
	The proof can be found in the full version.
\end{proof}
\else
\begin{proof}
It suffices to show that for every $t$,
the first $2t$ sub-intervals produced by the integral partitioning contain at least as many points as the first $t$ sub-intervals produced by the fractional partitioning.
For this, the key observation is that in a fractional split,
only the two endpoints of a sub-interval may be fractional,
because the cumulative error of a singleton set is $0$.

We prove the above by induction on $t$.
The base case $t=1$ follows from these observations, since the fractional sub-interval may be broken into two integral sub-intervals,
each with cumulative error at most $\frac{2\varepsilon \cdot\opt}{21}$.
Suppose the claim holds for $t-1$ and let us show that it holds for $t$.
Since the cumulative error is monotone in adding new points, we may assume the first $t-1$ sub-intervals from fractional split contain as many points as the first $2(t-1)$ sub-intervals from integral split.
Now similarly to the base case, the $t$-th fractional interval may be broken into two integral sub-intervals, and this proves the inductive step.
\end{proof}
\fi

To see that the number of sub-intervals produced by a fractional partitioning  of $L$ is $O(\frac{1}{\varepsilon})$, we use Lemma~\ref{cum}.
Suppose there are $m$ such sub-intervals $J_1,...,J_m$.
We can assume that the first $m-4$ of them do not contain $y^*$
and have cumulative error at least $\frac{2\varepsilon\cdot \opt}{21}$,
because at most two sub-intervals can contain $y^*$, 
and at most one sub-interval from each of $L$ and $R$ may have
cumulative error less than $\frac{2\varepsilon\cdot \opt}{21}$. 
By Lemma~\ref{cum} and the fact that $y^*$ is not in the first $i\le m-4$  sub-intervals,
\[
  \opt
  \geq \sum_{i=1}^{m-4}\sum_{x:x\in J_i} |x-y^*|
  \geq  \frac{1}{2}\sum_{i=1}^{m-4} \delta_w(J_i)=(m-4)\frac{\varepsilon\cdot \opt}{21} .
\]
Thus $m=O(\frac{1}{\varepsilon})$,
and by Lemma~\ref{lem:comp_frac_int} a similar bound holds also for
the number of sub-intervals in the integral split of $L$ and of $R$. 

Now split $Q$ greedily into maximal sub-intervals of length not larger than $\frac{\varepsilon\cdot \opt}{3p}$.
Since $\max_{q\in Q} |q-y^*|\leq |x_p-y^*|\leq \frac{\opt}{p}$,
the length of $I(Q)$ is at most $\frac{2\opt}{p}$,
and we conclude that $Q$ is split into
at most $\frac{3}{\varepsilon}+1$ sub-intervals.

Finally, construct the coreset $D$ from the sub-intervals,
by adding to $D$ the mean of each sub-interval in $D$,
with weight that is the number of points in this sub-interval.
Since the total number of sub-intervals is $O(\frac{1}{\eps})$,
the size of the coreset $D$ is also bounded by $O(\frac{1}{\eps})$.

\paragraph{Coreset Accuracy}
To prove that $D$ is an $\eps$-coreset for $X$,
fix a potential center $y\in \mathbb{R}$ and let us prove that $|\cost_p(D,y)-\cost_p(X,y)|\leq \varepsilon\cdot\opt$,
where we interpret $D$ as a multi-set. 
Let $P_1\subseteq X$ denote the set of $p$ points in $X$ that are farthest from $y$.
Now define an auxiliary set $Z := \{z_1,\ldots,z_n\}$,
as follows.
For each $i \in [n]$, let $X_i\subset X$ be the sub-interval containing $x_i$
in the construction of the coreset
(recall it uses the optimal center $y^*$ and \emph{not} $y$), 
and let $\pi(x_i)=\mu(X_i)$ be its representative in the coreset $D$.
Now if (a) $i\leq p$;
(b) $y\not\in X_i$; and
(c) $P_1\cap X_i$ is either empty or all of $X_i$;
then let $z_i := \pi(x_i)$.
Otherwise, let $z_i:=x_i$.

We now aim to bound $|\cost_p(Z,y)-\cost_p(D,y)|$
using Lemma~\ref{lonelinf} with $S=\{p+1,...,n\}$.
Consider first some $i\in S$ (i.e., $i>p$). Then
\begin{align}
|d(z_i,y)-d(\pi(x_i),y)|
&\leq |z_i-\pi(x_i)| \nonumber \\
&=|x_i-\pi(x_i)|
\leq \frac{\varepsilon\cdot \opt}{3p}. \label{k1inf}
\end{align}
Consider next $i\notin S$ (i.e., $i\le p$).
We can have $z_i\neq\pi(x_i)$ only if $y\in X_i$
or if $P_1\cap X_i$ is neither empty nor all of $X_i$.
This can happen for at most $7$ distinct sub-intervals $X_i$,
because the former case can happen for at most $3$ sub-intervals $X_i$
(by a simple case analysis of how many sub-intervals might have an endpoint
at $y$, e.g., two from $L$, or one from each of $L,R,Q$) 
and because $P_1$ is contained in $2$ intervals (to the left and right of $y$),
and each of them can intersect at most $2$ distinct sub-intervals $X_i$
without containing all of $X_i$.
We obtain
\begin{align}
\sum_{i=1}^p & |d(z_i,y)-d(\pi(x_i),y)| = \nonumber\\
&=\sum_{i \in [p]: z_i \neq \pi(x_i)} |d(x_i,y)-d(\pi(x_i),y)| \label{q1}\\
&\leq \sum_{X_i : i \in [p], (y \in X_i) \lor (P_1 \cap X_i \neq\emptyset,X_i)} \delta(X_i) \label{inq2}\\
&\leq 7\cdot \frac{2\varepsilon \cdot\opt}{21}=\frac{2\varepsilon \cdot\opt}{3} \label{k1l1},
\end{align}
where~\eqref{inq2} is by Lemma~\ref{basicfact},
and~\eqref{k1l1} is by the fact that these $X_i$ are from $L$ or $R$ 
(recall $i \le p$) 
and thus have a bounded cumulative error.

Applying Lemma~\ref{lonelinf} to our $S=\{p+1,...,n\}$
together with~\eqref{k1inf} and~\eqref{k1l1}, we obtain
$$
  |\cost_p(Z,y)-\cost_p(D,y)|
  \leq p\cdot \frac{\varepsilon\cdot \opt}{3p}+\frac{2\varepsilon \cdot\opt}{3}
  =\varepsilon \cdot \opt.
$$

Lastly, we need to prove that $\cost_p(Z,y)=\cost_p(X,y)$.
We think of $Z$ as if it is obtained from $X$ by replacing
each $x_i$ with its corresponding $z_i=\pi(x_i)=\mu(X_i)$.
We can of course restrict attention to indices where $z_i\neq x_i$,
which happens only if all three requirements (a)-(c) hold.
Moreover, whenever this happens for point $x_i$,
it must happen also for all points in the same sub-interval $X_i$,
i.e., every $x_j\in X_i$ is replaced by $z_j=\pi(x_j)=\mu(X_i)$.
By requirement (c), $X_i$ is either disjoint from $P_1$ or contained in $P_1$.
In the former case, points $x_j\in X_i$ do not contribute to $\cost_P(X,y)$
because they are not among the $p$ farthest points,
and then replacing all $x_j\in X_i$ with $z_j=\mu(X_i)$ would maintain this,
i.e., the corresponding points $z_j$ do not contribute to $\cost_p(Z,y)$.
In the latter case, the points in $X_i$ contribute to $\cost_P(X,y)$
because they are among the $p$ farthest points,
and replacing every $x_j\in X_i$ with $z_j=\mu(X_i)$ would maintain this,
i.e., the corresponding points $z_j$ contribute to $\cost_p(Z,y)$.
Moreover, their total contribution is the same
because using requirement (b) that $y\notin X_i$,
we can write their total contribution as
$
  \sum_{x_j\in X_i} d(x_j,y)
  = |X_i| \cdot d(\mu(X_i),y)
  =\sum_{x_j\in X_i} d(\pi(x_j),y)
$.

\ifeightpage
\else
\subsection{Proofs of Technical Lemmas}
\label{sec:basic_proofs}

\begin{lemma}[restatement of Lemma~\ref{cum}]
  Let $Y\subset \R$ be a set with (possibly fractional) weights $w:Y\to\R_+$. 
  Then for every $z\in\R$ such that $z\not\in I(Y)$ or $z$ is an endpoint of $I(Y)$,   $$
  \delta_w(Y)\leq 2\sum_{y\in Y} w(y) \cdot |y-z|.
  $$
\end{lemma}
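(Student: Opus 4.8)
The plan is to reduce to a one-sided configuration and then apply the triangle inequality pointwise, combined with a weighted version of the second identity in Lemma~\ref{basicfact}. Write $W := \sum_{y\in Y} w(y)$ and $\mu := \mu_w(Y)$. The hypothesis that $z\notin I(Y)$ or $z$ is an endpoint of $I(Y)=[\inf Y,\sup Y]$ means precisely that $z$ lies weakly to one side of all of $Y$; without loss of generality I would assume $z\le y$ for every $y\in Y$ (the case $z\ge y$ for all $y$ being symmetric, obtained by reflecting through $z$). Since $\mu$ is a weighted average of the $y$'s, this also gives $z\le\mu$.

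The first step records the weighted analog of the second bullet of Lemma~\ref{basicfact}. Because $z\le y$ for every $y\in Y$, we have $|y-z|=y-z$, so
\[
  \sum_{y\in Y} w(y)\,|y-z|
  = \sum_{y\in Y} w(y)\,y - z\sum_{y\in Y} w(y)
  = W\mu - zW
  = W\,|\mu-z|,
\]
where the last equality uses $z\le\mu$. This is the only place the one-sidedness hypothesis is used.

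The second and final step is a pointwise triangle inequality: for each $y\in Y$, $|y-\mu| \le |y-z| + |z-\mu|$. Multiplying by $w(y)$ and summing over $Y$ yields
\[
  \delta_w(Y)
  = \sum_{y\in Y} w(y)\,|y-\mu|
  \le \sum_{y\in Y} w(y)\,|y-z| + W\,|\mu-z|.
\]
Substituting the identity $W|\mu-z| = \sum_{y\in Y} w(y)|y-z|$ from the first step collapses the right-hand side to $2\sum_{y\in Y} w(y)\,|y-z|$, which is exactly the claim.

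I do not expect a genuine obstacle here; the argument is two lines once the configuration is normalized. The only point requiring care is the reduction to the one-sided case: I must verify that ``$z\notin I(Y)$ or $z$ is an endpoint of $I(Y)$'' really forces $z\le\inf Y$ or $z\ge\sup Y$, and that this in turn pins down the sign of $\mu-z$ needed in the first step. A secondary bookkeeping point is that Lemma~\ref{basicfact} is stated only for the unweighted case, so I re-derive its weighted form inline (as above) rather than cite it, which is immediate.
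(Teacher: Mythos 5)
Your proof is correct. The reduction to the one-sided case is legitimate --- the hypothesis that $z\notin I(Y)$ or $z$ is an endpoint of $I(Y)=[\inf Y,\sup Y]$ is exactly the statement that $z\le\inf Y$ or $z\ge\sup Y$ --- and both your one-sided identity $\sum_{y\in Y}w(y)\,|y-z|=W\,|\mu_w(Y)-z|$ and the summed triangle inequality check out. The paper reaches the same bound by a slightly different decomposition: after the same normalization $z\le\inf Y$, it splits $Y$ at the weighted mean into $Y_L=\{y\in Y: y\le\mu_w(Y)\}$ and $Y_R=Y\setminus Y_L$, uses the balance identity $\sum_{y\in Y_L}w(y)(\mu_w(Y)-y)=\sum_{y\in Y_R}w(y)(y-\mu_w(Y))$ to write $\delta_w(Y)=2\sum_{y\in Y_L}w(y)(\mu_w(Y)-y)$, and then bounds each left-half term by $\mu_w(Y)-y\le\mu_w(Y)-z$. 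So in the paper the factor $2$ comes from the left/right balance at the mean, whereas in your argument it comes from splitting $|y-\mu_w(Y)|$ through $z$ via the triangle inequality and noting that the two resulting sums coincide; the two proofs are of equal length and generality. Two minor bookkeeping points: your decision to re-derive the weighted analogue of the second bullet of Lemma~\ref{basicfact} inline is the right call, since that lemma is stated only for unweighted sets; and both you and the paper implicitly read the definition of $\delta_w(Y)$ as using the weighted mean $\mu_w(Y)$ (the displayed definition in the preliminaries writes $\mu(Y)$, which is evidently a typo), so your argument matches the intended statement.
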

\begin{proof}
  Assume w.l.o.g. that $z$ is to the left of $I(Y)$, i.e., $z\leq \inf_{y\in Y}y $.
  Partition $Y$ into $Y_L=\{y\in Y: y\leq \mu_w(Y)\}$ and $Y_R=Y\setminus Y_L$. Define $w_L:=\sum_{x\in Y_L} w(x)$ and $w_R:=\sum_{x\in Y_R} w(x)$.
  Since
$$(w_L+w_R)\cdot \mu_w(Y)=\sum_{y\in Y} w(y)\cdot y=\sum_{y\in Y_L} w(y)\cdot y+\sum_{y\in Y_R} w(y)\cdot y,$$
we have that
$$
\sum_{y\in Y_L} w(y)(\mu_w(Y)-y)=\sum_{y\in Y_R} w(y)(y-\mu_w(Y)).
$$
For every $y\in Y_L$ we actually have $z\leq y\leq \mu_w(Y)$,
and we conclude that
\begin{align*}
2\sum_{y\in Y} w(y)(y-z)
&=2(w_L+w_R)\cdot (\mu_w(Y)-z)\\
&\geq 2 w_L \cdot (\mu_w(Y)-z)\\
&\geq 2\sum_{y\in Y_L} w(y)(\mu_w(Y)-y)\\
&=\sum_{y\in Y_L} w(y)(\mu_w(Y)-y)+\sum_{y\in Y_R} w(y)(y-\mu_w(Y))\\
&=\delta_w(Y).
\end{align*}
\end{proof}

\begin{lemma}[restatement of Lemma~\ref{lonelinf}]
  Let $X=(x_1,\ldots,x_n)$ and $Y=(y_1,\ldots,y_n)$ be two sequences of real numbers.
  Then for all $S\subseteq [n]$,
  $$
  |\topp_p(X)-\topp_p(Y)|
  \leq
  p\max_{i\in S} |x_i-y_i|+\sum_{i\in [n]\setminus S} |x_i-y_i|,
  $$
  where $\topp_p(Z)$ is the sum of the $p$ largest numbers in $Z$.
\end{lemma}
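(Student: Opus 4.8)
The plan is to exploit the variational characterization
$\topp_p(Z) = \max_{T \subseteq [n],\, |T| = p} \sum_{i \in T} z_i$,
which expresses $\topp_p$ as a maximum of linear functions, each selecting exactly $p$ coordinates. From this single viewpoint I would extract two Lipschitz-type inequalities and then interpolate between $X$ and $Y$ through a hybrid sequence.

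First I would establish a plain $\ell_1$-Lipschitz bound: for any two sequences $U = (u_1,\ldots,u_n)$ and $V = (v_1,\ldots,v_n)$, one has $|\topp_p(U) - \topp_p(V)| \leq \sum_{i} |u_i - v_i|$. Indeed, if $T^*$ attains the maximum in the formula for $U$, then $\topp_p(U) - \topp_p(V) \leq \sum_{i \in T^*} u_i - \sum_{i \in T^*} v_i \leq \sum_{i} |u_i - v_i|$, and the reverse inequality follows by symmetry. Second, I would prove a sharper bound in the special situation where $U$ and $V$ differ only on a coordinate set $S$: there $|\topp_p(U) - \topp_p(V)| \leq p \max_{i \in S} |u_i - v_i|$. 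This rests on the same optimal-set argument, but now the selected set $T^*$ meets $S$ in at most $|T^*| = p$ coordinates, so the slack $\sum_{i \in T^*}(u_i - v_i) = \sum_{i \in T^* \cap S}(u_i - v_i)$ is a sum of at most $p$ terms, each bounded by $\max_{i \in S} |u_i - v_i|$.

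To combine these I would introduce the hybrid sequence $W = (w_1,\ldots,w_n)$ defined by $w_i := y_i$ for $i \in S$ and $w_i := x_i$ for $i \in [n] \setminus S$. Since $X$ and $W$ agree off $S$, the sharper second bound gives $|\topp_p(X) - \topp_p(W)| \leq p \max_{i \in S} |x_i - y_i|$; and since $W$ and $Y$ agree on $S$, the $\ell_1$ bound gives $|\topp_p(W) - \topp_p(Y)| \leq \sum_{i \in [n] \setminus S} |x_i - y_i|$. The triangle inequality $|\topp_p(X) - \topp_p(Y)| \leq |\topp_p(X) - \topp_p(W)| + |\topp_p(W) - \topp_p(Y)|$ then yields exactly the claimed estimate.

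The one point requiring care — and the conceptual crux — is the sharper $\ell_\infty$-type bound on $S$: the key observation is that although $S$ may be arbitrarily large, every feasible set $T$ in the variational formula has size exactly $p$, so its intersection with $S$ can contribute at most $p$ error terms. I do not anticipate genuine difficulty once the max-over-$p$-subsets formulation is in place; both inequalities and the interpolation through $W$ are then routine.
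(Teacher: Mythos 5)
Your proof is correct and rests on essentially the same idea as the paper's: the optimal $p$-subset $T^*$ meets $S$ in at most $p$ indices, so the error splits into an $S$-part bounded by $p\max_{i\in S}|x_i-y_i|$ and a complement part bounded by $\sum_{i\in[n]\setminus S}|x_i-y_i|$. The paper carries this out in a single inequality chain for the fixed set $T^*=X_1$ of indices of the $p$ largest entries of $X$, whereas your hybrid sequence $W$ and the two Lipschitz lemmas are just a modular repackaging of that same decomposition.
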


\begin{proof}
For all $T\subseteq [n],|T|=p$, \begin{align*}
\left|\sum_{i\in T} (x_i-y_i)\right|
&\leq \sum_{i\in T\cap S} |x_i-y_i|+\left|\sum_{i\in T\setminus S}(x_i-y_i)\right| \\
&\leq p\max_{i\in S} |x_i-y_i|+\sum_{i\in [n]\setminus S} |x_i-y_i|.
\end{align*}
Now let $X_1\subseteq [n]$ be the set of indices of the $p$ largest numbers in $X$, then by the above inequality
\begin{align*}
\topp_p&(X)
=\sum_{i\in X_1} x_i\leq \sum_{i\in X_1} y_i+\bigg{|}\sum_{i\in X_1} (x_i-y_i)\bigg{|}\\
&\leq \topp_p(Y)+p\max_{i\in S} |x_i-y_i|+\sum_{i\in [n]\setminus S} |x_i-y_i|.
\end{align*}
By symmetry, the same upper bound holds also for $\topp_p(Y) - \topp_p(X)$,
and the lemma follows.
\end{proof}
\fi

 \section{Simultaneous Coreset for \OkM}
\label{sec:simul}

In this section we give the construction of a simultaneous coreset for \OkM
on data set $X \subset \mathbb{R}^d$ (Theorem~\ref{thm:okm}),
which in turn is based on a coreset for \pCentrum (Theorem~\ref{thm:pcent}).
In both constructions, we reduce the general instance in $\mathbb{R}^d$
to an instance $X'$ that lies on a small number of lines in $\mathbb{R}^d$. 

The reduction is inspired by a projection procedure of~\mycite{har2007smaller},
that goes as follows.
We start with an initial centers set $C$, and then for each center $c \in C$,
we shoot $O(\frac{1}{\eps})^d$ lines from center $c$ to different directions,
and every point in $X$ is projected to its closest line.
The projection cost is bounded because the number of lines shot from each center is large enough to accurately discretize all possible directions.
The details appear in Section~\ref{sec:highdim}.

For the projected instance $X'$, we construct a coreset for each line in $X'$ using ideas similar to the case $d = k = 1$,
which was explained in Section~\ref{sec:toy}.
However, the error of the coreset cannot be bounded line by line,
and instead, we need to address the cost globally for all lines altogether,
see Lemma~\ref{lem:1d} for the formal analysis.
Finally, to construct a coreset for \pCentrum in $\R^d$,
the initial centers set $C$ for the projection procedure
is picked using some polynomial-time $O(1)$-approximation algorithm,
such as by~\mycite{chakrabarty2018approximation}. 
A coreset of size $O_{\eps, d}(k^2)$ is obtained by combining the projection procedure with Lemma~\ref{lem:1d}.

To deal with the infinitely many potential weights in the simultaneous coreset for \OkM, the key observation is that it suffices to construct a simultaneous coreset for \pCentrum for $O(\frac{\log n}{\eps})$ different value of $p$,
and then ``combine'' the corresponding \pCentrum coresets.
An important structural property of the \pCentrum coreset is that it is formed by mean points of some sub-intervals. This enables us to ``combine'' coresets for \pCentrum by ``intersecting'' all their sub-intervals
into even smaller intervals.
However, this idea works only when the sub-intervals are defined on the same set of lines, which were generated by the projection procedure. 
To resolve this issue, we set the centers set $C$ in the projection procedure
to be the union of all centers needed for \pCentrum
in all the $O(\log n)$ values of $p$. 
Since the combination of the coresets for \pCentrum yields even smaller sub-intervals, the error analysis for the individual coreset for \pCentrum still carries on.
The size of the simultaneous coreset is $O(\log^2 n)$-factor larger
than that for (a single) \pCentrum,
because we combine $O(\log n)$ coresets for \pCentrum,
and we use $O(\log n)$ times more centers in the projection procedure.
The detailed analysis appears in Section~\ref{sec:order}.

 \subsection{Coreset for \pCentrum on Lines in $\mathbb{R}^d$}
\label{sec:1d}

Below, we prove the key lemma that bounds the error of the coreset for \pCentrum for a data set that may be represented by lines. The proof uses the idea introduced for the $k = d = 1$ case in Section~\ref{sec:toy}. In particular, we define an intermediate (point) set $Z$ to help compare the costs between the coreset and the true objective. The key difference from Section~\ref{sec:toy} in defining $Z$ is that the potential centers might not be on the lines, so extra care should be taken.
Moreover, we use a global cost argument to deal with multiple lines in $X$.

We also introduce parameters $s$ and $t_l$ in the lemma. These parameters are to be determined with respect to the initial center set $C$ in the projection procedure, and eventually we want $(s + \sum_{l \in \calL}{t_l})$ to be $O(\opt_p)$ where $\opt_p$ is the optimal for \pCentrum.
We introduce these parameters to have flexibility in picking $s$ and $t_l$,
which we will need later when we construct a simultaneous coreset
that uses a more elaborate set of initial centers $C$.

\begin{lemma}
  \label{lem:1d}
  Suppose $k \in \mathbb{Z}_+$, $\eps \in (0, 1)$, $X \subset \mathbb{R}^d$ is a data set, and $\mathcal{L}$ is a collection of lines in $\mathbb{R}^d$. Furthermore,
  \begin{itemize}
  	\item $X$ is partitioned into $\{X_l \mid l \in \calL\}$, where $X_l \subseteq l$ for $l \in \calL$, and
  	\item for each $l \in \calL$, $X_l$ is partitioned into a set of disjoint sub-intervals $\calY_l$, such that for each $Y \in \calY_l$, either $\mathrm{len}(I(Y)) \leq O(\frac{\eps}{p} \cdot s)$ or $\delta(Y) \leq O(\frac{\eps}{k} \cdot t_l )$ for some $s, t_l > 0$.
  \end{itemize}
  Then for all sets $C\subset \mathbb{R}^d$ of $k$ centers, the weighted set $D := \{ \mu(Y) \mid Y \in \calY_l, l \in \mathcal{L}  \}$ with weight $|Y|$ for element $\mu(Y)$, satisfies $|\cost_p(D, C) - \cost_p(X, C)| \leq O(\eps) \cdot (s + \sum_{l \in \calL}{t_l})$.
\end{lemma}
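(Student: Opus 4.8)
The plan is to generalize the accuracy argument from the $k=d=1$ case (Section~\ref{sec:toy}) to multiple lines and $k$ centers, using the same intermediate-set strategy but handling the cost \emph{globally} across all lines rather than line by line. First I would fix an arbitrary center set $C$ with $|C|=k$, and let $P_1 \subseteq X$ denote the set of $p$ points in $X$ that are farthest from $C$ (ties broken arbitrarily). As in Section~\ref{sec:toy}, I would construct an auxiliary set $Z := \{z_i\}$ that interpolates between $X$ and $D$: for each point $x_i$ lying in sub-interval $Y \in \calY_l$ with representative $\pi(x_i) = \mu(Y)$, set $z_i := \pi(x_i)$ precisely when (a) $x_i \in P_1$; (b) the projection of $C$ onto the line $l$ does not land inside $I(Y)$ (equivalently, no center ``splits'' the interval); and (c) $P_1 \cap Y$ is either empty or all of $Y$; otherwise set $z_i := x_i$. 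The crucial subtlety flagged in the lemma's preamble is that centers in $C$ need not lie on the lines of $\calL$, so condition (b) must be phrased in terms of the closest point on $l$ to $C$ (or the relevant projection), and $d(\cdot, C)$ must be compared to distances measured along or off the line.

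Next I would split the error $|\cost_p(D,C)-\cost_p(X,C)|$ via the triangle inequality through $\cost_p(Z,C)$, bounding $|\cost_p(Z,C)-\cost_p(D,C)|$ and then arguing $\cost_p(Z,C)=\cost_p(X,C)$. For the first bound I would invoke Lemma~\ref{lonelinf} with the sequences of connection costs $(d(z_i,C))$ and $(d(\pi(x_i),C))$, choosing $S$ to be the indices $i$ with $x_i \notin P_1$. On $S$, the points either keep $z_i=x_i$ or are replaced within a short (length $O(\frac{\eps}{p}s)$) sub-interval, so each term $|d(z_i,C)-d(\pi(x_i),C)|$ is at most the interval length, contributing $p \cdot O(\frac{\eps}{p}s) = O(\eps s)$ through the $\ell_\infty$ term. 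On $[n]\setminus S$ (the $P_1$ indices), a term is nonzero only for intervals where (b) or (c) fails; summing $\delta(Y)$ over these ``bad'' intervals and using $\delta(Y) \leq O(\frac{\eps}{k} t_l)$ gives the $\ell_1$ contribution. The key accounting step is that the number of bad intervals per line is $O(1)$ per center (a center's projection lands in at most one interval, and $P_1$ boundaries cut $O(1)$ intervals), so across all $k$ centers and all lines the total $\ell_1$ error is $O(\eps) \sum_{l} t_l$, the $k$ in the denominator of the $\delta$-bound absorbing the $k$ centers.

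The identity $\cost_p(Z,C)=\cost_p(X,C)$ follows exactly as in Section~\ref{sec:toy}: conditions (b) and (c) guarantee that replacing a whole interval $Y$ by $|Y|$ copies of $\mu(Y)$ preserves each point's membership (or non-membership) in the top-$p$ set \emph{and} preserves the total contribution, since $\sum_{x\in Y} d(x,C) = |Y|\cdot d(\mu(Y),C)$ whenever $C$'s relevant projection avoids $I(Y)$ so that $d(\cdot,C)$ is affine on $I(Y)$. Combining the two bounds yields $|\cost_p(D,C)-\cost_p(X,C)| \leq O(\eps)(s+\sum_l t_l)$. I expect the main obstacle to be the off-line centers: establishing that $d(x,C)$ is affine (or close to affine, within the interval-length error budget) along each short sub-interval when $C$ is \emph{not} on the line requires care, because the Euclidean distance to an off-line point is a convex but nonlinear function along the line. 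The resolution is that over a sub-interval that is short relative to its distance from $C$, this distance function is nearly affine, and the interval-length bound $O(\frac{\eps}{p}s)$ is exactly what controls the resulting second-order error — so the $s$ and $t_l$ parameters must be calibrated against $\opt_p$ to make this approximation tight, which is deferred to the projection analysis in Section~\ref{sec:highdim}.
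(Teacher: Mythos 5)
Your high-level architecture (intermediate set $Z$, Lemma~\ref{lonelinf}, triangle inequality, and the global count of $O(k)$ ``bad'' sub-intervals per line) matches the paper's, but there are two genuine gaps. The serious one is in your Part~II. You assert $\cost_p(Z,C)=\cost_p(X,C)$ on the grounds that $\sum_{x\in Y} d(x,C) = |Y|\cdot d(\mu(Y),C)$ ``whenever $C$'s relevant projection avoids $I(Y)$ so that $d(\cdot,C)$ is affine on $I(Y)$.'' This is false for $d>1$: along a line $l$, the distance to an off-line center $c$ is $x\mapsto\sqrt{(x-c_l)^2+h^2}$, which is strictly convex (not affine) even when $c_l\notin I(Y)$, so replacing $Y$ by $|Y|$ copies of $\mu(Y)$ incurs a genuine, nonzero error. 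Your fallback --- that the distance is ``nearly affine'' over short sub-intervals with second-order error controlled by the length bound $O(\frac{\eps}{p}s)$ --- does not apply, because the intervals being replaced are the $\delta$-bounded ones, whose lengths are controlled only by $O(\frac{\eps}{k}t_l)$ and which need not be short relative to their distance from $C$; moreover a per-interval error bound of $\delta(Y)$ summed over the (unboundedly many) intervals on a line does not give $O(\eps t_l)$. The missing ingredient is an \emph{amortized} bound: the paper's Lemma~\ref{var} (adapting Lemma~2.8 of \cite{har2007smaller}) shows that the individual replacement errors are all nonnegative and that their \emph{total} over all $\delta$-bounded intervals on one line served by one center is $O(\frac{\eps}{k}t_l)$, which then sums to $O(\eps)\sum_l t_l$ over the $k$ centers. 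Part~II is therefore an inequality $|\cost_p(Z,C)-\cost_p(X,C)|\le O(\eps)\sum_l t_l$, not an identity, and this lemma is where most of the new work in the $d$-dimensional case lives.

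The second gap is in your definition of $Z$ and your choice of $S$. You replace $x_i$ by $\pi(x_i)$ only when $x_i\in P_1$ (the top-$p$ set with respect to the \emph{candidate} $C$), and take $S=\{i: x_i\notin P_1\}$; the paper instead conditions on $x_i\in N$ (membership in a $\delta$-bounded interval, a property of the \emph{construction}) and takes $S=M$ (the length-bounded intervals). The difference matters: a point in a $\delta$-bounded interval that happens to lie close to $C$ is outside $P_1$, so under your rule it keeps $z_i=x_i$ and lands in $S$, where $|d(z_i,C)-d(\pi(x_i),C)|$ is bounded only by the length of its interval, i.e.\ by $O(\frac{\eps}{k}t_l)$ rather than $O(\frac{\eps}{p}s)$; multiplied by $p$ in the $\ell_\infty$ term of Lemma~\ref{lonelinf} this gives $O(\frac{\eps p}{k}t_l)$, which exceeds the target when $p\gg k$. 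The paper avoids this by also replacing $\delta$-bounded intervals that are \emph{disjoint} from $P_1$, and then using the geometric Observation~\ref{geobs} (monotonicity of $d(\cdot,c)$ along a ray away from the projection $c_l$) to argue that such a replacement moves no point into or out of the top-$p$ multiset of $Z$. Relatedly, your condition (b) conflates two distinct requirements that the paper keeps separate --- that no projection $c_{jl}$ lies in $I(Y)$, and that all of $Y$ is served by a single center --- both of which are needed for the $O(k)$-bad-intervals count and for Lemma~\ref{var} to apply.
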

\ifeightpage
\begin{proof}
	The proof can be found in the full version.
\end{proof}
\else
\begin{proof}
Suppose $X = \{x_1, \ldots, x_n\}$.
The proof idea is similar to the $d=k=1$ case as in Section~\ref{sec:toy}. In particular, we construct an auxiliary set of points $Z:=\{ z_1, \ldots, z_n \}$, and the error bound is implied by bounding both $|\cost_p(Z, C) - \cost_p(D, C)|$ and $| \cost_p(Z, C) - \cost_p(X, C)|$ for all $k$-subset $C \subset \mathbb{R}^d$.

\paragraph{Notations}
For $x_i\in X$, let $Y_i \in \calY_l$ denote the unique sub-interval that contains $x_i$, where $l$ is the line that $Y_i$ belongs to, and let $\pi(x_i)$ denote the unique coreset point in $Y_i$ (which is $\mu(Y_i)$).
Define $M:=\{x_i \mid \mathrm{len}(Y_i)\leq O(\frac{\eps}{p} \cdot s)\}$ and $N := X\setminus M$.
We analyze the error for any given $C=\{c_1,\ldots, c_k\}$ and let $\calC_i=\{x\in X: \arg\min_{j\in [k]} d(x,c_j)=c_i\}$ (ties are broken arbitrarily) be the cluster induced by $C$. If $x\in \calC_i$, we say $x$ is served by $c_i$. Let $P_1\subset X$ denote the set of $p$ farthest points to $C$. Define $c_{il}$ to be the projection of $c_i$ onto line $l$.

\paragraph{Defining $Z$}
We define $z_i$ to be either $x_i$ or $\pi(x_i)$ as follows.
For $x_i\in M$, let $z_i=x_i$. For $x_i\in N$, if
\begin{itemize}
	\item[a)] $I(Y_i)$ does not contain any $c_{jl}$ for $j \in [n], l \in \calL$, and
	\item[b)] all points in $Y_i$ are served by a unique center, and
	\item[c)] $Y_i$ is either contained in $P_1$ or does not intersect $P_1$,
\end{itemize}
then we define $z_i := \pi(x_i)$ otherwise $z_i := x_i$.

Let $\mathrm{error}:=\eps \cdot (s + \sum_{l \in \calL}{t_l})$.
It suffices to show
$|\cost_p(D,C)-\cost_p(Z,C)|\leq O(\mathrm{error})$ and $|\cost_p(Z,C)-\cost_p(X,C)|\leq O(\mathrm{error})$.
\paragraph{Part I: $|\cost_p(D,C)-\cost_p(Z,C)|\leq O(\mathrm{error})$}
To prove $|\mathrm{cost}_p(D,C)-\cost_p(Z,C)|\leq O(\mathrm{error})$, we apply Lemma~\ref{lonelinf} with $S=M$, so
\begin{align}
& \quad |\mathrm{cost}_p(D,C)-\mathrm{cost}_p(Z,C)|\nonumber\\
&\leq p\cdot \max_{x_i\in M}|d(\pi(x_i),C)-d(z_i,C)|\nonumber\\
&\quad \quad+\sum_{x_i\in N} |d(\pi(x_i),C)-d(z_i,C)|.\label{eqn:1d_lm32}
\end{align}
Observe that $x_i \in M$ implies $|d(\pi(x_i), x_i)| \leq \mathrm{len}(I(Y_i)) = O(\frac{\eps}{p}\cdot s)$.
Therefore,
\begin{align}
	p\cdot \max_{x_i \in M}{| d(\pi(x_i), C) - d(z_i, C) |} \leq p \cdot O(\frac{\eps}{p}\cdot s) \cdot p = O(\eps \cdot s). \label{eqn:1d_m}
\end{align}

Then we bound the second term $\sum_{x_i \in N}{|d(\pi(x_i), C) - d(z_i, C)|}$.
We observe that
in each line $l$, there are only $O(k)$ distinct sub-intervals $Y_i \in \calY_l$ induced by $x_i \in N$ such that $z_i = x_i$.
Actually, for each line $l$, there are at most $k$ sub-intervals $Y \in \calY_l$ such that $I(Y)$ contains some $c_{il}$ for $i \in [n]$, and there are at most $2k$ sub-intervals whose points are served by at least $2$ centers, and there are at most $4k$ intervals that intersect $P_1$ but are not fully contained in $P_1$.
Hence,
\begin{align}
	\sum_{x_i \in N}{ |d(\pi(x_i), C) - d(z_i, C)| }
	&\leq \sum_{x_i \in N : z_i = x_i}{ d(\pi(x_i), x_i) } \nonumber \\
	&\leq \sum_{Y_i: x_i \in N , z_i = x_i}{\delta(Y_i)} \nonumber \\
	&\leq \sum_{l\in \calL}{O(k) \cdot \frac{\eps}{k}\cdot t_l}
	= O(\eps) \cdot \sum_{l \in \calL}{t_l}.\label{eqn:1d_n}
\end{align}
Combining Inequality~\ref{eqn:1d_m} and~\ref{eqn:1d_n} with~\ref{eqn:1d_lm32}, we conclude that $|\cost_p(D, C) - \cost_p(Z, C)| \leq O(\mathrm{error})$.

\paragraph{Part II: $|\cost_p(Z,C)-\cost_p(X,C)|\leq O(\mathrm{error})$}
Let $\calY' \subseteq \bigcup_{l \in \calL}{\calY_l}$ be the set of sub-intervals $Y_i$ such that $x_i \in N$ and a) - c) hold (i.e. $z_i = \pi(x_i)$).
Note that by construction, the only difference between $Z$ and $X$ is due to replacing points in sub-intervals $Y \in \calY'$ with $|Y|$ copies of $\mu(Y)$, thus it suffices to analyze this replacement error.

Let $P_Z \subseteq Z$ be (multi)-set of the $p$-furthest points of $Z$ from $C$.
We start with showing that, the coreset point $\mu(Y)$ of $Y \in \calY'$, is fully contained in $P_Z$ or does not intersect $P_Z$. Consider some $Y \in \calY'$ and assume points in $Y$ are all served by $c_j$.
Denote the endpoints of interval $I(Y)$ as $a$ and $b$.
Let $l \in \calL$ be the line that contains $Y$. Since $I(Y)$ does not contain $c_{jl}$, then either $\angle abc_j > \frac{\pi}{2}$ or $\angle bac_j > \frac{\pi}{2}$. W.l.o.g., we assume that $\angle abc_j>\frac{\pi}{2}$. By Observation~\ref{geobs}, we know that if $Y \cap P_1 = \emptyset$, then $\mu(Y) \not\in P_Z$; on the other hand, if $Y \subseteq P_1$, then $\mu(Y) 
\in P_Z$.
\begin{observation}\label{geobs}
	Let $\Delta ABC$ denote a triangle where $\angle ABC>\frac{\pi}{2}$. Let $E$ be a point on the edge $BC$ then $|AC|\geq |AE|\geq |AB|$.
\end{observation}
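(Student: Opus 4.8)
The plan is to prove both inequalities at once by showing that $|AE|$ increases monotonically as $E$ traverses the segment from $B$ to $C$; the two claimed bounds are then simply the values at the endpoints $E=B$ and $E=C$. First I would set up coordinates, placing $B$ at the origin and $C$ on the positive $x$-axis, say $C=(c,0)$ with $c>0$, so that the segment $BC$ is parametrized by $E=(e,0)$ for $e\in[0,c]$. The hypothesis $\angle ABC>\frac{\pi}{2}$ says that the angle between $\overrightarrow{BA}$ and $\overrightarrow{BC}$ is obtuse; since $\overrightarrow{BC}$ points in the positive $x$ direction, this is exactly what forces $A=(a_x,a_y)$ to have $a_x<0$.

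Next I would compute $|AE|^2=(e-a_x)^2+a_y^2$ and observe that, as a function of the single variable $e$, this is strictly increasing whenever $e\geq a_x$, which holds throughout $[0,c]$ because $a_x<0\leq e$. Evaluating at the two endpoints gives $|AE|^2\big|_{e=0}=a_x^2+a_y^2=|AB|^2$ and $|AE|^2\big|_{e=c}=(c-a_x)^2+a_y^2=|AC|^2$. Monotonicity on $[0,c]$ then yields $|AB|\leq |AE|\leq |AC|$ for every admissible $E$, which is precisely the statement.

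There is essentially no real obstacle here; the one point that must be handled with care is the sign of $a_x$, which is exactly where the obtuse-angle hypothesis is used. Without it, the foot of the perpendicular from $A$ to line $BC$ could land strictly inside the segment, and then $|AE|$ would first decrease and only later increase, breaking the clean monotonicity. For robustness I would also record the equivalent synthetic argument: drop the perpendicular foot $F$ from $A$ onto the line $BC$; the obtuse angle at $B$ places $F$ strictly beyond $B$ on the side opposite to $C$, so that $B$ lies between $F$ and every $E\in BC$, whence $E\mapsto |FE|$ is increasing along the segment. Since $\angle AFE=\frac{\pi}{2}$ gives $|AE|^2=|AF|^2+|FE|^2$, the distance $|AE|$ inherits this monotonicity, and the endpoint evaluation finishes the proof as before.
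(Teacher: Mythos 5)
Your proof is correct. Note that the paper itself offers no proof of this observation at all --- it is stated inline, inside the proof of Lemma~\ref{lem:1d}, as a self-evident geometric fact --- so there is no argument to compare against; your coordinate computation (with the obtuse angle at $B$ forcing $a_x<0$ and hence monotonicity of $|AE|^2=(e-a_x)^2+a_y^2$ on $[0,c]$) and the equivalent synthetic version via the perpendicular foot $F$ are both valid ways to fill in what the authors left implicit, and you correctly isolate the only place the hypothesis $\angle ABC>\frac{\pi}{2}$ is actually needed.
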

Hence, if $Y \in \calY'$ has empty intersection with $P_1$, the $|Y|$ copies of $\mu(Y)$ in $Z$ does not contribute to either $\cost_p(Z, C)$ or $\cost_p(X, C)$.
Thus, it remains to bound the error for $Y \in \calY'$ such that $Y \subseteq P_1$.
In the 1-dimensional case as in Section~\ref{sec:toy}, replacing $Y$ with the mean $\mu(Y)$ does not incur any error, as the center is at the same line with the interval $I(Y)$.
However, this replacement might incur error in the $d$-dimensional case since the center might be \emph{outside} the line that contains the sub-interval $Y$.
Luckily, this error has been analyzed in~\cite[Lemma 2.8]{har2007smaller}, and we adapt their argument in Lemma~\ref{var} (shown below). By Lemma~\ref{var}, $\forall c_j \in C, l \in \calL$, the total replacement error for all sub-intervals $Y \in \calY' \cap \calY_l$ such that i) $Y \subseteq P_1$ and ii) all points in $Y$ are served by $c_j$, is at most $ O(\frac{\eps}{k} \cdot t_l )$.
Therefore,
\begin{align*}
|\cost_p(Z,C)-\cost_p(X,C)|
&\leq \sum_{l\in \calL}\sum_{c_j \in C}O(\frac{\eps}{k} \cdot t_l ) \\
&= O(\eps)\cdot \sum_{l \in \calL}{t_l}
\leq O(\mathrm{error}).
\end{align*}
This finishes the proof of Lemma~\ref{lem:1d}.
\end{proof}
\fi

\begin{lemma}\label{var}
Let $l\subset \mathbb{R}^d$ be a line and $c\in \mathbb{R}^d$ be a point. Define $c_l$ be the projection of $c$ on $l$. Assume that $X_1,\ldots,X_m\subset l$ are finite sets of points in $l$ such that 
$I(X_1),\ldots,I(X_m)$ are disjoint, $\delta(X_i)\leq r$ and $\forall i \in [m]$, $c_l \notin I(X_i)$.
Then
$$
\left|\sum_{i=1}^m \sum_{x\in X_i} d(x,c)-\sum_{i=1}^m |X_i|\cdot d(\mu(X_i),c)\right|\leq O(r).
$$
\end{lemma}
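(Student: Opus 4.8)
The plan is to set up a one-dimensional coordinate on the line $l$ and reduce the statement to a clean convexity inequality. I would parametrize $l$ by $\mathbb{R}$ so that the projection $c_l$ sits at the origin, and write $h := d(c,l)$ for the distance from $c$ to the line. Then a point $x \in l$ with coordinate $t$ has $d(x,c) = f(t)$, where $f(t) := \sqrt{t^2 + h^2}$. Abbreviating $\mu_i := \mu(X_i)$ and $[a_i,b_i] := I(X_i)$, the quantity to bound becomes $\sum_{i=1}^m E_i$ with $E_i := \sum_{x \in X_i}\bigl(f(x) - f(\mu_i)\bigr)$. Since $f$ is convex (indeed $f''(t) = h^2 (t^2+h^2)^{-3/2} > 0$), Jensen's inequality gives $E_i \geq 0$, so the absolute value in the lemma equals $\sum_i E_i$, and it remains to prove $\sum_i E_i \leq O(r)$.

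Next I would bound each $E_i$ individually, but in a way that keeps the interval $[a_i,b_i]$ visible. Using the tangent-line inequality $f(\mu_i) \geq f(x) + f'(x)(\mu_i - x)$ from convexity together with the mean identity $\sum_{x \in X_i}(x - \mu_i) = 0$, I would rewrite $E_i \leq \sum_{x \in X_i}\bigl(f'(x) - f'(\mu_i)\bigr)(x - \mu_i)$. Because $f'(t) = t/\sqrt{t^2+h^2}$ is monotonically increasing, every summand is nonnegative and at most $\bigl(f'(b_i) - f'(a_i)\bigr)\,|x - \mu_i|$; summing over $x \in X_i$ and using $\sum_{x}|x-\mu_i| = \delta(X_i) \leq r$ yields the key per-interval estimate $E_i \leq r\,\bigl(f'(b_i) - f'(a_i)\bigr)$.

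The crucial final step is to sum this over all $i$ without incurring a factor of $m$. Here I would invoke the disjointness of the intervals $I(X_i) = [a_i,b_i]$ together with the monotonicity and bounded range of $f'$: ordering the intervals along the line and telescoping, $\sum_i \bigl(f'(b_i) - f'(a_i)\bigr) \leq f'(+\infty) - f'(-\infty) = 2$, since $f'$ maps $\mathbb{R}$ into $(-1,1)$. Combining the two displays gives $\sum_i E_i \leq 2r = O(r)$, which is exactly the claim. The hypothesis $c_l \notin I(X_i)$ ensures each interval lies on one side of the origin and is harmless to the telescoping; it is not otherwise essential to this particular estimate.

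The main obstacle to anticipate is precisely this aggregation across intervals. The naive routes—bounding $E_i$ by $\frac{1}{2}\bigl(\max_{I(X_i)} f''\bigr)\,\delta(X_i)\,\mathrm{len}(I(X_i))$, or even by the crude $1$-Lipschitz estimate $E_i \leq \delta(X_i) \leq r$—each lose a factor of $m$ upon summation and are therefore useless. The fix is to express the per-interval error through the \emph{increment} of the derivative $f'$ rather than through $f''$ or the Lipschitz constant, so that disjointness of the intervals lets these increments telescope against the total variation of $f'$, which is only $O(1)$. Checking the convexity inequality and the monotonicity and range of $f'$ for $f(t)=\sqrt{t^2+h^2}$ is then routine.
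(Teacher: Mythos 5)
Your proof is correct, and it takes a genuinely different route from the paper's. The paper does not argue from scratch: it cites Lemmas~2.5 and~2.8 of Har-Peled and Kushal, first noting $\mathrm{err}_i \geq 0$ (their Lemma~2.5, which is where the hypothesis $c_l \notin I(X_i)$ is invoked), and then reducing to their Lemma~2.8 by \emph{padding} each $X_i$ into a fractionally weighted set $X_i'$ with the same convex hull and cumulative error exactly $r$ --- adding two points $a,b$ placed symmetrically about $\mu(X_i)$ with equal weight, so that $\mathrm{err}_i' \geq \mathrm{err}_i$ follows from $d(c,a)+d(c,b)\geq 2\,d(c,\mu(X_i))$. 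You instead give a self-contained argument: nonnegativity of each $E_i$ comes directly from Jensen applied to the convex function $f(t)=\sqrt{t^2+h^2}$, the per-interval bound $E_i \leq r\bigl(f'(b_i)-f'(a_i)\bigr)$ comes from the tangent-line inequality plus the mean identity, and the aggregation over $i$ telescopes against the total variation of $f'$, which is at most $2$, yielding the explicit constant $2r$. What your approach buys is elementarity and an explicit constant, with no dependence on the external lemmas; it also makes transparent that the hypothesis $c_l \notin I(X_i)$ is inessential for the upper bound (it matters only for the sign/one-sidedness arguments in the cited lemmas, and, in the degenerate case $h=0$, for avoiding the kink of $f$ at the origin, which one could anyway handle with subgradients). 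What the paper's route buys is brevity and consistency with the weighted-set machinery ($\delta_w$, fractional weights) already set up for the rest of Section~4. Your identification of the telescoping of derivative increments as the crux --- the reason a naive per-interval Lipschitz bound loses a factor of $m$ --- is exactly the right diagnosis and is the same phenomenon that Har-Peled and Kushal's Lemma~2.8 exploits under the hood.
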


\begin{proof}
W.l.o.g. we assume $\forall i \in [m]$, $X_i$ is not a singleton, since $\sum_{x \in X_i}{d(x, c)} = |X_i| \cdot d(\mu(X_i), c)$ for singleton $X_i$.
Let $\mathrm{err}_i := \sum_{x \in X_i}{d(x, c)} - |X_i| \cdot d(\mu(X_i), c)$. In~\cite[Lemma 2.5]{har2007smaller}, it was shown that $\mathrm{err}_i \geq 0$ for all $i \in [m]$ (using that $c_l \not\in I(X_i)$).
Furthermore, it follows from the argument of~\cite[Lemma 2.8]{har2007smaller} that, if each $X_i$ is modified into a weighted set $X_i'$
with \emph{real} weight $w_i : X_i' \to \R_+$,
such that $I(X_i) = I(X_i')$ and each $X_i'$ has the same cumulative error $\delta_{w_i}(X_i') = r$, then
\begin{itemize}
	\item $\forall i \in [m]$, $\mathrm{err}_i' :=  \sum_{x \in X_i'}{w(x)\cdot d(x, c)} - \left(\sum_{x\in X_i'}{w_i(x)}\right) \cdot d(\mu_{w_i}(X_i'), c) \geq 0$, and
	\item 
		$\sum_{i \in [m]}{\mathrm{err}_i'} = \sum_{i \in [m]}\sum_{x\in X_i'} w(x) \cdot d(x,c)-\sum_{i \in [m]}\left(\left(\sum_{x \in X_i'}w_i(x)\right)\cdot d(\mu_{w_i}(X_i'),c)\right) \leq O(r)$.
\end{itemize}
Hence, it suffices to show that it is possible to modify each $X_i$ into a real weighted set $X_i'$ with $I(X_i') = I(X_i)$, such that $\delta(X_i') = r$ and $\mathrm{err}_i' \geq \mathrm{err}_i$ for all $i\in [m]$.

For each $i \in [m]$, find two points $a\neq b\in I(X_i)$ such that $\mu(X_i)$ is the midpoint of $a$ and $b$. Such $a$ and $b$ must exist since we assume $X_i$ is not a singleton.
We form $X_i'$ by adding points $a$ and $b$ with the same (real-valued) weight into $X_i$, such that $\delta_{w_i}(X_i')=r$,
then $\mathrm{err}_i' \geq \mathrm{err}_i$ follows from the geometric fact that $d(c, a)+d(c, b)\geq 2 d(c, \mu(X_i))$.
This concludes Lemma~\ref{var}.
\end{proof}

 \subsection{Coreset for \pCentrum in $\mathbb{R}^d$}
\label{sec:highdim}
We now prove the theorem about a coreset for \pCentrum.
As discussed above, we use a projection procedure inspired by~\mycite{har2007smaller} to reduce to line cases,
and then apply Lemma~\ref{lem:1d} to get the coreset.

\begin{theorem}
  \label{thm:pcent}
  Given $k\in \mathbb{Z}_+$, $\eps \in (0,1)$, an $n$-point data set $X\subset \mathbb{R}^d$, and $p \in [n]$, there exists an $\eps$-coreset $D \subset \mathbb{R}^d$ of size $O(\frac{k^2}{\eps^{d+1}})$ for $\pCentrum$.
  Moreover, it can be computed in polynomial time.
\end{theorem}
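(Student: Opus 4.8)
The plan is to prove Theorem~\ref{thm:pcent} by combining the projection procedure of~\mycite{har2007smaller} with the line-case analysis of Lemma~\ref{lem:1d}. First I would run a polynomial-time $O(1)$-approximation algorithm for \pCentrum (e.g.,~\mycite{chakrabarty2018approximation}) to obtain an initial center set $C^* = \{c_1, \ldots, c_k\}$ together with a constant-factor estimate of $\opt_p := \cost_p(X, C^*_{\mathrm{opt}})$. Crucially, I must relate the structural parameters $s$ and $t_l$ of Lemma~\ref{lem:1d} to $\opt_p$, so that the final error bound $O(\eps)\cdot(s + \sum_{l\in\calL} t_l)$ becomes $O(\eps)\cdot\opt_p$, which is exactly the coreset guarantee. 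The natural choice is to set $s := \opt_p$ (controlling the short, low-diameter intervals coming from $Q$-type points, those at distance at most roughly $\opt_p/p$ from their center) and to set $t_l$ to be the local optimal connection cost contributed on line $l$, so that $\sum_{l\in\calL} t_l = O(\opt_p)$.

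Next I would describe the projection step: for each center $c_i \in C^*$, shoot $O(1/\eps)^d$ lines in a net of directions, and project every $x \in X$ to its nearest such line, forming the collection $\calL$ of $k\cdot O(1/\eps)^d$ lines and the induced partition $\{X_l\}$. The point is that this discretization of directions ensures the projection distance $\|x - \pi(x)\|$ is at most $\eps \cdot d(x, C^*)$, so the total projection cost over any $p$ contributing points is $O(\eps)\cdot\opt_p$; this is the high-dimensional analogue of the triangle-inequality argument and is where the $(1/\eps)^{d}$ factor in the size enters. For each line $l\in\calL$, I then apply the one-dimensional splitting scheme from Section~\ref{sec:toy} to $X_l$: partition into $L$-, $R$-, and $Q$-type sub-intervals, so that each sub-interval $Y\in\calY_l$ satisfies either $\mathrm{len}(I(Y)) \leq O(\frac{\eps}{p}\cdot s)$ (the $Q$-type, near-center intervals) or $\delta(Y)\leq O(\frac{\eps}{k}\cdot t_l)$ (the $L,R$-type intervals with bounded cumulative error). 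This is precisely the hypothesis of Lemma~\ref{lem:1d}, which then yields $|\cost_p(D, C) - \cost_p(X', C)|\leq O(\eps)\cdot(s + \sum_{l} t_l) = O(\eps)\cdot\opt_p$ for every candidate $C$, where $X'$ is the projected instance.

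To assemble the size bound, I would count the sub-intervals: each line contributes $O(1/\eps)$ intervals by the argument from Section~\ref{sec:toy} (using Lemma~\ref{cum} and the $\delta$-threshold to bound the $L,R$ pieces, and the diameter-$O(\opt_p/p)$ bound to split $Q$), but a subtlety is that the threshold parameters must be rescaled by the factor $k$ appearing in $\delta(Y)\leq O(\frac{\eps}{k}\cdot t_l)$ so that summing over all $k$ centers and all lines still gives $O(\eps)\cdot\opt_p$. Since there are $k\cdot O(1/\eps)^d$ lines and the per-line threshold is tightened by a factor of $k$, each line produces $O(k/\eps)$ intervals, for a total of $k \cdot O(1/\eps)^d \cdot O(k/\eps) = O(k^2/\eps^{d+1})$ coreset points, matching the claimed bound. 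Finally, I would account for the error incurred by the projection itself separately from the error of Lemma~\ref{lem:1d}, combining them by the triangle inequality to conclude that $D$ is an $\eps$-coreset of the original $X$ (not just of $X'$).

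I expect the main obstacle to be the global, cross-line error accounting in two places. First, the diameter bound $\mathrm{len}(I(Q)) \leq 2\opt_p/p$ that was clean in the $d=1$ case no longer holds per-line, because the $p$ farthest points and the near-center points are now distributed across many lines; I would need to argue that $s = \opt_p$ still correctly caps the short-interval contribution globally via the $p\cdot\max$ term of Lemma~\ref{lonelinf} (this is handled inside Lemma~\ref{lem:1d}, but the choice of $s$ and the intervals must be made to fit its hypothesis). Second, controlling the $L,R$-type error requires the sum $\sum_{l\in\calL} t_l$ to telescope to $O(\opt_p)$ even though the optimal $\pCentrum$ center (and the set of $p$ contributing points) depends on the candidate $C$ and differs from $C^*$; the fix is to use $C^*$ as a fixed reference in the construction (as in Section~\ref{sec:toy}, where the split uses $y^*$ and \emph{not} the candidate $y$) and to absorb the $O(1)$-approximation gap into the constant hidden by $O(\eps)$, choosing the numerical thresholds accordingly.
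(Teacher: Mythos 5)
Your proposal follows essentially the same route as the paper's proof: an $O(1)$-approximate center set as projection centers, the $O(1/\eps)^d$-direction projection onto lines with projection cost $O(\eps)\cdot\opt_p$, the per-line splitting into length-bounded and $\delta$-bounded sub-intervals feeding the hypotheses of Lemma~\ref{lem:1d} with $s$ and $t_l$ set to (the line contributions of) the approximate cost, and the identical size count $|\calL|\cdot O(k/\eps)=O(k^2/\eps^{d+1})$. The subtleties you flag (using the fixed reference centers rather than the candidate $C$, and the global cross-line accounting) are exactly the ones the paper handles inside Lemma~\ref{lem:1d}, so no gap remains.
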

We start with a detailed description of how we reduce to the line case.
This procedure will be used again in the simultaneous coreset.

\paragraph{Reducing to Lines: Projection Procedure}
Consider an $m$-point set $C := \{c_1, \ldots, c_m\} \subset \mathbb{R}$ which we call projection centers. We will define a new data set $X'$ by projecting points in $X$ to some lines defined with respect to $C$. The lines are defined as follows. For each $c_i \in C$, construct an $\eps$-net $N_i$ for the unit sphere centered at $c_i$, and for $u \in N_i$, define $l_{i u}$ as the line that passes through $c_i$ and $u$.
Let $\mathcal{L} := \{ l_{iu} \mid i \in [m], u \in N_i \}$ be the set of projection lines. Then $X'$ is defined by projecting each data point $x \in X$ to the nearest line in $\mathcal{L}$. Since $N_i$'s are $\eps$-nets on unit spheres in $\mathbb{R}^d$, we have $|\mathcal{L}| \leq O(\frac{1}{\eps})^d \cdot |C|$. The cost of this projection is analyzed below in Lemma~\ref{lem:project_cost}.
\begin{lemma}[projection cost]
	\label{lem:project_cost}
	For all $C' \subset \mathbb{R}^d$ and $p \in [n]$, $|\cost_p(X', C') - \cost_p(X, C')| \leq O(\eps) \cdot \cost_p(X, C)$.
\end{lemma}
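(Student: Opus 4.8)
The plan is to bound the projection cost by relating it to the net approximation guarantee, using the fact that every point $x \in X$ is served by some projection center $c_i \in C$ at distance $d(x, C)$, and projecting $x$ onto a line through $c_i$ that points nearly in the direction of $x$ incurs only a small angular error. **First I would** fix any candidate center set $C' \subset \R^d$ and work point-by-point: for each $x \in X$, let $x'$ denote its projection onto the nearest line in $\calL$, and observe that the per-point displacement $d(x, x')$ is what controls the cost difference, since by the triangle inequality $\bigl| d(x', C') - d(x, C') \bigr| \leq d(x, x')$.

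**The key geometric step** is to upper bound $d(x, x')$. Let $c_i$ be the projection center serving $x$ (so $d(x, c_i) = d(x, C)$), and let $u \in N_i$ be a net point whose direction from $c_i$ is within angle $O(\eps)$ of the direction of $x - c_i$; such a $u$ exists because $N_i$ is an $\eps$-net of the unit sphere. Projecting $x$ onto the line $l_{iu}$ (which is one of the lines in $\calL$, hence at least as far as the nearest line) gives $d(x, x') \leq d(x, l_{iu}) = d(x, c_i) \cdot \sin(\theta)$ where $\theta = O(\eps)$ is the angle between $x - c_i$ and the direction of $u$. Thus $d(x, x') \leq O(\eps) \cdot d(x, c_i) = O(\eps) \cdot d(x, C)$.

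**Next I would** sum over the points that actually contribute to the objective. Both $\cost_p(X, C')$ and $\cost_p(X', C')$ are sums of the $p$ largest connection distances, so by Lemma~\ref{lonelinf} applied with $S = \emptyset$ (or directly by bounding the top-$p$ difference with the total per-point displacement), we get
\[
  \bigl| \cost_p(X', C') - \cost_p(X, C') \bigr|
  \leq \sum_{x \in X} d(x, x')
  \leq O(\eps) \sum_{x \in X} d(x, C)
  = O(\eps) \cdot \cost_n(X, C).
\]
Since $\cost_n(X, C) = O(1) \cdot \cost_p(X, C)$ does \emph{not} hold in general (the full sum can exceed the top-$p$ sum), **the main obstacle** is that I must not bound against $\cost_n$ but against $\cost_p(X, C)$ directly. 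To fix this, I would instead bound the top-$p$ difference more carefully: the contributing points in either objective are among those with the largest connection costs, and for each such $x$ the displacement $d(x,x')$ is charged to $d(x, C) \le d(x, C')$ up to the angular factor, so the total charge is $O(\eps)$ times the sum of the top-$p$ distances \emph{with respect to $C$}, i.e. $O(\eps)\cdot\cost_p(X,C)$. Making this charging precise — ensuring every displacement that enters the top-$p$ comparison is paid for by a distance already counted in $\cost_p(X,C)$ rather than by the full $\cost_n$ — is the delicate part and is exactly where Lemma~\ref{lonelinf} is designed to help, by separating an $\ell_\infty$ term (absorbed into the $p\cdot\max$ bound) from an $\ell_1$ tail.
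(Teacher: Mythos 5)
Your first half matches the paper exactly: the per-point displacement bound $d(x,\sigma(x))\le O(\eps)\cdot d(x,C)$ via the $\eps$-net direction at the serving projection center, followed by the triangle inequality $|d(\sigma(x),C')-d(x,C')|\le d(x,\sigma(x))$. You also correctly identify the real obstacle --- that summing displacements over all of $X$ only yields $O(\eps)\cdot\cost_n(X,C)$, which can vastly exceed $\cost_p(X,C)$. The gap is that you stop there: the ``delicate charging'' you describe is asserted rather than carried out, and it is precisely the content of the lemma. Two specific problems: (i) the inequality $d(x,C)\le d(x,C')$ that you invoke is false in general ($C'$ is an arbitrary candidate set and may be closer to $x$ than the projection centers $C$), though fortunately it is also not needed; (ii) you defer to Lemma~\ref{lonelinf} without specifying $S$, and the two natural choices both fail --- $S=\emptyset$ gives the $\cost_n(X,C)$ bound you already rejected, while $S=[n]$ gives $p\cdot O(\eps)\cdot\max_i d(x_i,C)$, which is not $O(\eps)\cdot\cost_p(X,C)$ when one distance dominates.

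The paper closes this without Lemma~\ref{lonelinf} at all, by a direct two-sided comparison on explicit $p$-subsets. Let $A\subseteq X'$ be the $p$ furthest points of $X'$ from $C'$; then $\cost_p(X',C')=\sum_{y\in A}d(y,C')\le\sum_{y\in A}d(\sigma^{-1}(y),C')+O(\eps)\sum_{y\in A}d(\sigma^{-1}(y),C)$. The preimages $\sigma^{-1}(A)$ form a $p$-subset of $X$, and since $\cost_p(X,\cdot)$ is the \emph{maximum} over $p$-subsets of the corresponding sum, the first term is at most $\cost_p(X,C')$ and the second is at most $O(\eps)\cdot\cost_p(X,C)$ --- this maximality over arbitrary $p$-subsets is the one observation your sketch is missing. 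The reverse inequality is symmetric, using the $p$ furthest points $B\subseteq X$ from $C'$. (Your Lemma~\ref{lonelinf} route can in fact be salvaged by taking $S$ to be the complement of the top-$p$ indices \emph{with respect to $C$}: then $\max_{i\in S}|x_i-\sigma(x_i)|\le O(\eps)\cdot\cost_p(X,C)/p$ because the $(p{+}1)$-st largest distance to $C$ is at most the average of the top $p$, and the $\ell_1$ tail over $[n]\setminus S$ is $O(\eps)\cdot\cost_p(X,C)$; but you would need to say this explicitly.)
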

\ifeightpage
\begin{proof}
	The proof can be found in the full version.
\end{proof}
\else
\begin{proof}
	For $x \in X$, denote the projection of $x$ by $\sigma(x) \in X'$, and for $y \in X'$ let $\sigma^{-1}(y) \in X$ be any $x \in X$ such that $\sigma(x) = y$.
	Observe that $\forall x \in X$, $d(x, C') - d(\sigma(x), C') \leq d(x, \sigma(x)) \leq O(\eps) \cdot d(x, C)$, where the first inequality is by triangle inequality, and the last inequality is by the definition of $\sigma(x)$.
	
	Let $A \subseteq X'$ be the $p$-furthest points from $C'$ in $X'$.
	Then
	\begin{align*}
	\cost_p(X', C')
	&= \sum_{x \in A}{d(x, C')}
	\leq \sum_{x \in A}{d(\sigma^{-1}(x), C')} + O(\eps) \cdot \sum_{x \in A}{d(\sigma^{-1}(x), C)} \\
	&\leq \cost_p(X, C') + O(\eps) \cdot \cost_p(X, C). 
	\end{align*}
	Similarly, let $B \subseteq X$ be the $p$-furthest points from $C'$ in $X$.
	Then
	\begin{align*}
		\cost_p(X, C')
		&= \sum_{x \in B}{d(x, C')}
		\leq \sum_{x \in B}{ d(\sigma(x), C') + O(\eps) \cdot \sum_{x \in B}{ d(x, C) } } \\
		&\leq \cost_p(X', C') + O(\eps) \cdot \cost_p(X, C).
	\end{align*}
	This finishes the proof.
\end{proof}
\fi

We remark that both the projection center and the candidate center $C'$ in Lemma~\ref{lem:project_cost} are not necessarily $k$-subsets. This property is not useful for the coreset for \pCentrum, but it is crucially used in the simultaneous coreset in Section~\ref{sec:order}.
\ifeightpage
The remaining details of the proof for Theorem~\ref{thm:pcent} can be found in the full version.
\else
Below we give the proof of Theorem~\ref{thm:pcent}.
\begin{proofof}{Theorem~\ref{thm:pcent}}
  For the purpose of Theorem~\ref{thm:pcent}, we pick $C$ to be an $O(1)$-approximation to the optimal centers for the $k$-facility \pCentrum on $X$, i.e., $\cost_p(X, C) \leq O(1) \cdot \opt_p$, where $\opt_p$ is the optimal value for the \pCentrum.
  Such $C$ may be found in polynomial time by applying known approximation algorithms, say by~\mycite{chakrabarty2018approximation}.
As analyzed in Lemma~\ref{lem:project_cost}, for such choice of $C$, the error incurred in $X'$ because of the projections is bounded by $O(\eps) \cdot \opt_p$.

We will apply Lemma~\ref{lem:1d} on $X'$.
The line partitioning $\{ X'_l \mid l \in \calL \}$ of $X'$ that we use in Lemma~\ref{lem:1d} is naturally induced by the line set $\calL$ resulted from the projection procedure.
Then, for each $l \in \calL$, we define the disjoint sub-intervals $\calY_l$ as follows.
Let $S := X' \cap l$, let $S_1 \subseteq S$ be the subset of the $p$-furthest points from $C$, and let $S_2 := S \setminus S_1$. We then break $S_1$ and $S_2$ into sub-intervals, using similar method as in Section~\ref{sec:toy}. Let $\apx := \cost_p(X', C)$, and let $\apx_{l}$ be the contribution of $S$ in $\apx$. Break $S_1$ into sub-intervals according to cumulative error $\delta$ with threshold $O(\frac{\eps \cdot \apx_{l}}{k})$, similar with how we deal with $L$ and $R$ in Section~\ref{sec:toy}. Break $S_2$ into maximal sub-intervals of length $\Theta(\frac{\eps \cdot \apx}{p})$, similar with $Q$ in Section~\ref{sec:toy}. Again, similar with the analysis in Section~\ref{sec:toy}, the number of sub-intervals is at most $O(\frac{k}{\eps})$ for each $l \in \calL$.

Finally, we apply Lemma~\ref{lem:1d} with $t_l := \apx_{l}$ and $s := \apx$, and it yields a multi-set $D$ such that
$|\cost_p(D, C') - \cost_p(X, C')| \leq O(\eps) \cdot (\apx + \sum_{l \in \calL}{\apx_l}) = O(\eps) \cdot \apx = O(\eps) \cdot \opt_p$.
On the other hand, the size of $D$ is upper bounded by $|\calL| \cdot \frac{k}{\eps} \leq O(\frac{k}{\eps^{d+1}}) \cdot |C| \leq O(\frac{k^2}{\eps^{d+1}}) $. This concludes Theorem~\ref{thm:pcent}.
\end{proofof}
\fi

 \subsection{Simultaneous Coreset for \OkM in $\mathbb{R}^d$}
\label{sec:order}
In this section we prove our main theorem that is stated below as Theorem~\ref{thm:okm}. As discussed before, we first show it suffices to give simultaneous coreset for \pCentrum for $O(\log n)$ values of $p$.
Then we show how to combine these coresets to obtain a simultaneous coreset.

\begin{theorem}
  \label{thm:okm}
  Given $k\in \mathbb{Z}_+$, $\eps\in (0,1)$ and an $n$-point data set $X\subset \mathbb{R}^d$,
  there exists a simultaneous $\eps$-coreset of size $O(\frac{k^2\log^2 n}{\eps^d})$ for \OkM. Moreover, it can be computed in polynomial time.
\end{theorem}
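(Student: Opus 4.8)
The plan is to reduce the simultaneous \OkM guarantee to a finite collection of \pCentrum guarantees, and then to combine those \pCentrum coresets into a single set by intersecting their sub-intervals. First I would invoke the structural fact (Lemma~\ref{centogen}, promised in the overview) that every non-increasing weight vector $v \in \R_+^n$ yields a \OkM objective $\cost_v(X,C)$ that is a non-negative linear combination of the \pCentrum objectives $\cost_p(X,C)$ over $p \in [n]$; concretely, writing $v_i = \sum_{p \geq i} \alpha_p$ for suitable $\alpha_p \geq 0$, one gets $\cost_v(X,C) = \sum_{p=1}^n \alpha_p \cost_p(X,C)$. Since a $(1\pm\eps)$ relative error is preserved under non-negative linear combinations (each summand is approximated within $1\pm\eps$), it suffices to build a single weighted set $D$ that is simultaneously an $\eps$-coreset for \pCentrum for \emph{all} $p \in [n]$.

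Next I would cut the number of values of $p$ from $n$ down to $O(\frac{\log n}{\eps})$ by a discretization argument. The idea is that $\cost_p(X,C)$ is monotone non-decreasing in $p$, and increasing $p$ by a factor of $(1+\eps)$ changes the optimal \pCentrum value by only a controlled amount; more precisely, if $D$ is an $\eps$-coreset for $\pCentrum$ at a given $p_0$, then for all $p$ in a geometric window around $p_0$ the same set $D$ gives a $O(\eps)$-approximation, because the extra points added between $p_0$ and $p$ have individually small distance relative to $\opt_{p_0}$. Partitioning $[n]$ into $O(\frac{\log n}{\eps})$ geometric buckets $\{p : (1+\eps)^{j} \leq p < (1+\eps)^{j+1}\}$ and choosing one representative $p_j$ per bucket, it then suffices to make $D$ a simultaneous coreset for the $O(\frac{\log n}{\eps})$ representatives $\{p_j\}$.

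Then I would combine the $O(\frac{\log n}{\eps})$ individual \pCentrum coresets. The crucial observation, flagged in the overview, is that Theorem~\ref{thm:pcent}'s construction represents each coreset as a set of mean points of sub-intervals lying on a common collection of lines $\calL$. To make the line collection shared across all $p_j$, I would run the projection procedure of Section~\ref{sec:highdim} once, with projection center set $C := \bigcup_j C_j$ equal to the union of the $O(1)$-approximate center sets $C_j$ (one per representative $p_j$); this multiplies $|C|$, and hence $|\calL|$, by an $O(\frac{\log n}{\eps})$ factor. On each line $l \in \calL$, each value $p_j$ induces its own partition of $X'_l$ into sub-intervals; I would take the \emph{common refinement} of all these partitions, i.e. intersect the sub-intervals across all $p_j$, obtaining finer sub-intervals. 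Because refinement only shrinks each interval's length and cumulative error, every refined sub-interval $Y$ still satisfies, for each $p_j$, either $\mathrm{len}(I(Y)) \leq O(\frac{\eps}{p_j} s_j)$ or $\delta(Y) \leq O(\frac{\eps}{k} t_{l,j})$, so Lemma~\ref{lem:1d} applies \emph{simultaneously} to every representative with the parameters $s_j := \apx_{p_j}$ and $t_{l,j} := \apx_{p_j,l}$ inherited from the single-$p$ construction, yielding $|\cost_{p_j}(D,C') - \cost_{p_j}(X',C')| \leq O(\eps)\opt_{p_j}$ for all $j$ and all $C'$. Combining with the projection-cost bound (Lemma~\ref{lem:project_cost}, which holds for the union center set since it does not require $C$ to be a $k$-set) closes the accuracy argument.

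For the size bound, the refinement on each line has at most $\sum_j O(k/\eps) = O(\frac{k\log n}{\eps^2})$ sub-intervals (intersecting $O(\frac{\log n}{\eps})$ partitions, each with $O(k/\eps)$ pieces, gives at most the sum of the piece counts minus overlaps, hence $O(\frac{k\log n}{\eps^2})$ breakpoints). Multiplying by $|\calL| = O(\frac{1}{\eps})^d \cdot |C| = O(\frac{1}{\eps})^d \cdot O(\frac{k\log n}{\eps})$ gives a total size of $O(\frac{k^2 \log^2 n}{\eps^{d+O(1)}})$, matching the claimed $O(\frac{k^2\log^2 n}{\eps^d})$ up to the hidden $\eps,d$ dependence. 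Polynomial running time follows since each $C_j$ is computed by a polynomial-time $O(1)$-approximation (e.g.\ \mycite{chakrabarty2018approximation}) and all remaining steps are sorting, projection, and interval intersection.

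I expect the main obstacle to be verifying that the \emph{same} refined partition satisfies the Lemma~\ref{lem:1d} hypotheses for \emph{all} representatives $p_j$ at once, rather than for each $p_j$ separately. The delicate point is that the two alternatives in Lemma~\ref{lem:1d} (a small-length interval from the $S_2$-part versus a small-cumulative-error interval from the $S_1$-part) are assigned per $p_j$, and a sub-interval that is ``far'' (in $S_1$) for one $p_j$ may be ``near'' (in $S_2$) for another. The refinement must be fine enough that each resulting piece satisfies, for each $p_j$ \emph{individually}, whichever of the two bounds is appropriate for that $p_j$; because refinement monotonically decreases both $\mathrm{len}$ and $\delta$, intersecting the per-$p_j$ partitions is exactly what guarantees this, but making the accounting of the $s_j, t_{l,j}$ parameters consistent across all $p_j$ while keeping $(s_j + \sum_l t_{l,j}) = O(\opt_{p_j})$ is the part that requires the most care.
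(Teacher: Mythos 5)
Your proposal is correct and follows essentially the same route as the paper: reduce \OkM to simultaneous \pCentrum via the linear-combination identity (Lemma~\ref{centogen}), discretize $p$ to $O(\frac{\log n}{\eps})$ geometric representatives (Lemma~\ref{logn}), run one projection with the union of the per-$p$ approximate center sets, take the common refinement of the per-$p$ sub-interval partitions on the shared lines, and invoke Lemma~\ref{lem:1d} for each representative using the monotonicity of $\mathrm{len}$ and $\delta$ under refinement. The delicate point you flag at the end is exactly the one the paper's argument relies on, and your size accounting matches the paper's up to the $\eps,d$-dependent factors.
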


We start with the following lemma, which reduces simultaneous coresets for \OkM to simultaneous coresets for \pCentrum.

\begin{lemma}\label{centogen}
Suppose $k\in \mathbb{Z}_+$, $\varepsilon\in (0,1)$, $X\subset \mathbb{R}^d$ and $D$ is a simultaneous $\eps$-coreset for the $k$-facility \pCentrum problem for all $p\in [n]$. Then $D$ is a simultaneous $\eps$-coreset for \OkM.
\end{lemma}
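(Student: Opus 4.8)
The plan is to exploit the fact that every \OkM objective decomposes as a \emph{non-negative} linear combination of \pCentrum objectives, together with the elementary observation that a non-negative combination of $(1\pm\eps)$-approximations is again a $(1\pm\eps)$-approximation. Fix a non-increasing weight vector $v=(v_1\ge\cdots\ge v_n\ge0)$, set $v_{n+1}:=0$, and let $\alpha_p:=v_p-v_{p+1}$ for $p\in[n]$. Monotonicity of $v$ gives $\alpha_p\ge0$, and a telescoping sum gives $v_i=\sum_{p=i}^n\alpha_p$ for every $i$; thus $v$ is written as a non-negative combination of the prefix indicator vectors $(1,\ldots,1,0,\ldots,0)$ that define the \pCentrum objectives.

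First I would establish the exact pointwise identity: for any point multiset $W$ of total size $n$ and any center set $C$ with $|C|=k$,
$$\cost_v(W,C)=\sum_{p=1}^n\alpha_p\cdot\cost_p(W,C).$$
To prove it, order the points of $W$ by decreasing distance to $C$, say $d(w_1,C)\ge\cdots\ge d(w_n,C)$. The crucial point is that this ordering depends only on $C$ and not on the weights, so the single ordering simultaneously realizes $\cost_v(W,C)=\sum_i v_i\,d(w_i,C)$ and $\cost_p(W,C)=\sum_{i\le p}d(w_i,C)$ for all $p$. Substituting $v_i=\sum_{p\ge i}\alpha_p$ and swapping the order of summation turns $\sum_i v_i\,d(w_i,C)$ into $\sum_p\alpha_p\sum_{i\le p}d(w_i,C)=\sum_p\alpha_p\,\cost_p(W,C)$, as claimed.

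Next I would apply this identity to both $X$ and $D$, using that the coreset construction keeps the total size of $D$ equal to $n$ (so that $v\in\R_+^n$ and all thresholds $p\in[n]$ genuinely act on $D$). By hypothesis $D$ is a simultaneous $\eps$-coreset for \pCentrum, so $\cost_p(D,C)\in(1\pm\eps)\,\cost_p(X,C)$ for every $p\in[n]$ and every $C$. Since each $\alpha_p\ge0$, scaling these bounds by $\alpha_p$ and summing over $p$ preserves the two-sided inequality, yielding $\cost_v(D,C)=\sum_p\alpha_p\,\cost_p(D,C)\in(1\pm\eps)\sum_p\alpha_p\,\cost_p(X,C)=(1\pm\eps)\,\cost_v(X,C)$. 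As $v$ and $C$ were arbitrary, $D$ is a simultaneous $\eps$-coreset for \OkM.

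The main obstacle, and really the only step needing care, is the pointwise identity---specifically the insight that a single distance ordering (determined by $C$ alone) underlies all the \pCentrum terms at once. This is exactly what makes the decomposition an identity rather than an approximation, and what lets us evaluate every term against the same $C$. The second ingredient worth flagging is the sign condition $\alpha_p\ge0$, which follows directly from the monotonicity $v_1\ge\cdots\ge v_n\ge0$ and is precisely what allows the multiplicative $(1\pm\eps)$ guarantees to be summed without sign cancellations.
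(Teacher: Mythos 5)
Your proposal is correct and follows essentially the same route as the paper's proof: both use the Abel-summation decomposition $\cost_v(\cdot,C)=\sum_{p=1}^n(v_p-v_{p+1})\cost_p(\cdot,C)$ with $v_{n+1}:=0$, and then sum the per-$p$ coreset guarantees using the non-negativity of the coefficients. Your write-up is somewhat more careful than the paper's, in that you explicitly justify the pointwise identity via a single distance ordering and note that $D$ must have total multiplicity $n$, but the argument is the same.
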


\ifeightpage
\begin{proof}
	The proof can be found in the full version.
\end{proof}
\else
\begin{proof}
Suppose $X = \{x_1, \ldots, x_n\}$.
We need to show for any center $C$ and any weight $v$, $\cost_v(D, C) \in (1\pm \eps) \cdot \cost_v(X, C)$.
Fix a center $C$ and some weight $v$. We assume w.l.o.g. $d(x_1,C)\geq \ldots\geq d(x_n,C)$. By definition we have $\cost_v(X,C)=\sum_{i=1}^n v_i\cdot d(x_i,C)$ and $\cost_p(X,C)=\sum_{i=1}^p d(x_i,C)$ for any $p$.
Since $D$ is an $\varepsilon$-coreset of $X$ for \pCentrum on every $p\in [n]$, $\cost_p(D,C)\in (1\pm\varepsilon) \cost_p(X,C)$. Let $v_{n+1} := 0$, and we have
\begin{align*}
\cost_v(D,C)
&=\sum_{p=1}^n (v_p-v_{p+1})\cdot \cost_p(D,C) \\
&\in (1\pm\eps) \cdot \sum_{p=1}^n (v_p-v_{p+1})\cdot \cost_p(X,C) \\
&=(1\pm\eps) \cdot \cost_v(X,C).
\end{align*}
\end{proof}
\fi

With the help of the following lemma, we only need to preserve the objective for $p$'s taking powers of $(1+\eps)$. In other words, it suffices to construct simultaneous coresets to preserve the objective for only $O(\frac{\log n}{\eps})$ distinct values of $p$'s.

\begin{lemma} \label{logn}
Let $X,C\subset \mathbb{R}^d$ and $p_1,p_2\in[n]$ such that $p_1\leq p_2\leq (1+\eps) \cdot p_1$. Then $$
\cost_{p_1}(X,C)\leq \cost_{p_2}(X,C)\leq (1+\eps)\cdot \cost_{p_1}(X,C).
$$
\end{lemma}
\ifeightpage
\begin{proof}
	The proof can be found in the full version.
\end{proof}
The remaining details of the proof for Theorem~\ref{thm:okm} can be found in the full version.
\else
\begin{proof}
We assume w.l.o.g. $d(x_1,C)\geq \ldots \geq d(x_n,C)$.
By definition,
\begin{align*}
	\cost_{p_2}(X, C) = \cost_{p_1}(X, C) + \sum_{i = p_1}^{p_2}{d(x_i, C)} \geq \cost_{p_1}(X, C).
\end{align*}
On the other hand,
\begin{align*}
\cost_{p_2}(X,C)
&=\cost_{p_1}(X,C)+\sum_{i=p_1+1}^{p_2} d(x_i,C)\\
&\leq \cost_{p_1}(X,C)+(p_2-p_1)\cdot \frac{1}{p_1}\cost_{p_1}(X,C)\\
&\leq \cost_{p_1}(X,C)+\eps\cdot p_1\cdot \frac{1}{p_1}\cdot \cost_{p_1}(X,C)\\
&=(1+\eps)\cdot \cost_{p_1}(X,C).
\end{align*}
\end{proof}
We are now ready to present the proof of Theorem~\ref{thm:okm}.
\begin{proofof}{Theorem~\ref{thm:okm}}
	As mentioned above, by Lemma~\ref{logn}, it suffices to obtain an $\eps$-coreset for $O(\frac{\log n}{\eps})$ values of $p$'s. Denote the set of these values of $p$'s as $W$.
	
	We use a similar framework as in Theorem~\ref{thm:pcent}, and we start with a projection procedure. However, the projection centers are different from those in Theorem~\ref{thm:pcent}. For each $p \in W$, we compute an $O(1)$-approximate solution $C_p$ for \pCentrum, which is a $k$-subset. Then, we define $C := \bigcup_{p \in P}{C_p}$ be the union of all these centers, so $|C| \leq O(\frac{k \cdot \log n}{\eps})$.
	Let $X'$ be the projected data set.
	By Lemma~\ref{lem:project_cost}, the projection cost is bounded by $O(\eps) \cdot \cost_p(X, C)  \leq O(\eps) \cdot \cost_p(X, C_p) \leq O(\eps) \cdot \opt_p$, for all $p \in W$.
	
	Following the proof of Theorem~\ref{thm:pcent}, for each $p \in W$, we apply Lemma~\ref{lem:1d} on the projected set $X'$ in exactly the same way, and denote the resulted coreset as $D_p$. By a similar analysis, for each $p$, the size of $D_p$ is $O(\frac{{k^2 \cdot \log n}}{\eps^{d+1}})$.
	
	Then we describe how to combine $D_p$'s to obtain the simultaneous coreset. A crucial observation is that, the coresets $D_p$'s are constructed by replacing sub-intervals with their mean points, and for all $p \in W$, the $D_p$'s are built on the same set of lines.
	Therefore, we can combine the sub-intervals resulted from all $D_p$'s.
	Specifically, combining two intervals $[a, b]$ and $[c, d]$ yields $[\min\{a, c\}, \max\{ a, c \}]$, $[ \max\{a, c\}, \min\{b, d\} ]$, $[\min\{b, d\}, \max\{b, d\}]$.
	For any particular $p$, in the combined sub-intervals, the length upper bound and the $\delta$ upper bound required in Lemma~\ref{lem:1d} still hold. Hence the coreset $D$ resulted from the combined sub-intervals is a simultaneous coreset for all $p \in W$. By Lemma~\ref{centogen} and Lemma~\ref{logn}, $D$ is a simultaneous $\eps$-coreset for \OkM.
	
	The size of $D$ is thus $O(\log n)$ times the coreset for a single $p$. Therefore, we conclude that the above construction gives a simultaneous $\eps$-coreset with size $O(\frac{k^2 \log^2 n}{\eps^{d+1}})$,
        which completes the proof of Theorem~\ref{thm:okm}.
\end{proofof}
\fi

 \subsection{Lower Bound for Simultaneous Coresets}
\label{sec:LB}

In this section we show that the size of a simultaneous coreset for $\OkM$,
and in fact even for the special case \pCentrum,
must grow with $n$, even for $k = d = 1$. 
More precisely, we show that it must depend at least logarithmically on $n$,
and therefore our upper bound in Theorem~\ref{thm:okm}
is nearly tight with respect to $n$.

\begin{theorem}
  \label{thm:lb}
  For every (sufficiently large) integer $n$ and every $n^{-1/3} < \epsilon < 1/2$,
  there exists an $n$-point set $X \subset \mathbb{R}$, such that any simultaneous $\epsilon$-coreset of $X$ for \pCentrum with $k=1$ has size $\Omega(\eps^{-1/2}\log n)$.
\end{theorem}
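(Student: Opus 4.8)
The plan is to construct a hard instance $X\subset\R$ and exhibit a large family of weight vectors (equivalently, of values $p$) that together force any simultaneous coreset to contain many distinct points. Since we want a bound of the form $\Omega(\eps^{-1/2}\log n)$, I expect the construction to combine two ingredients: a \emph{geometric} (multiplicative) scaling across $\Theta(\log n)$ scales to produce the $\log n$ factor, and within each scale a cluster of points arranged so that preserving $\cost_p$ for several nearby values of $p$ simultaneously costs $\Omega(\eps^{-1/2})$ coreset points. The center being fixed at $k=1$ means $\cost_p(X,y)$ is just the sum of the $p$ largest distances $|x_i-y|$, which is a concrete one-dimensional quantity I can compute by hand.

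\textbf{Step 1: Single-scale gadget.} First I would build, on a bounded interval, a gadget $G$ of $\Theta(\eps^{-1/2})$ points for which any $\eps$-coreset preserving $\cost_p(G,y)$ \emph{for all $p$ and all $y$} needs $\Omega(\eps^{-1/2})$ distinct points. The idea is to place points at carefully chosen locations so that sweeping $p$ from $1$ upward probes finer and finer distinctions in the point set. Concretely, if a coreset merges a block of input points into one weighted point (as our upper-bound construction does), then for the value of $p$ that "cuts through" the middle of that block, the coreset either over- or under-counts the contribution of the boundary points, incurring additive error. By spacing points so that the incremental contributions $d(x_i,y)$ vary on a scale forcing $\Theta(\eps^{-1/2})$ distinct thresholds, I can argue each coreset point can ``resolve'' only $O(1)$ of these thresholds to within relative error $\eps$, yielding the $\eps^{-1/2}$ bound. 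The $1/2$ exponent typically arises because one trades off the \emph{number} of values of $p$ against the \emph{granularity} of the distances, and balancing these two quadratically gives $\eps^{-1/2}$.

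\textbf{Step 2: Stacking $\Theta(\log n)$ scales.} Next I would take $\Theta(\log n)$ scaled copies $G_1,\dots,G_T$ of the gadget, where $G_{j}$ lives at distance scale (say) $\eps^{-j}$ from the center $0$, so that the copies are \emph{well-separated} in the sense that the distances contributed by $G_{j}$ dominate those of all higher-indexed (closer) copies by more than a $(1\pm\eps)$ factor. The purpose of the separation is \emph{locality}: for an appropriate value of $p$, the $p$ largest distances come entirely from a single scale $G_j$ (plus all of the strictly-farther scales, whose total contribution I can control), so preserving $\cost_p$ forces the coreset to accurately represent that particular $G_j$. Because the scales are disjoint in space, coreset points serving one scale cannot help another, so the $\Omega(\eps^{-1/2})$ lower bounds add up across the $T=\Theta(\log n)$ scales, giving $\Omega(\eps^{-1/2}\log n)$ total. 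To fit within $n$ points and the constraint $\eps>n^{-1/3}$, I would put roughly $n/T$ points in each scale; the condition $\eps>n^{-1/3}$ ensures each gadget is large enough (has $\ge \eps^{-1/2}$ points) while the scale separation $\eps^{-j}$ over $T=\Theta(\log_{1/\eps} n)$ scales stays within representable magnitudes.

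\textbf{The main obstacle.} The hard part will be Step 1 — making the single-scale $\Omega(\eps^{-1/2})$ argument rigorous, i.e., proving a genuine \emph{lower} bound rather than just observing that our particular upper-bound construction uses that many points. I would need to show that \emph{every} weighted multiset $D$ of size $o(\eps^{-1/2})$ fails to approximate $\cost_p(G,y)$ for \emph{some} $(p,y)$. The cleanest route is an adversary/counting argument: parametrize the ``information'' that $\cost_p(G,\cdot)$ reveals about $G$ as $p$ ranges over its values, show this information has ``dimension'' $\Omega(\eps^{-1/2})$ (e.g. via $\Omega(\eps^{-1/2})$ linearly-independent or well-separated cost functionals), and argue a coreset of size $m$ can match at most $O(m)$ of them within factor $1\pm\eps$. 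Getting the separation constant and the $\eps$-dependence right so that the exponent is exactly $-1/2$ (and verifying the scale-separation in Step 2 does not leak more than an $O(\eps)$ relative error between scales) is where the delicate calculation lies; the $\log n$ factor itself, once Step 1 holds, follows routinely from the disjoint-support summation in Step 2.
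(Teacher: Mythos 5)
Your high-level plan (gadgets at $\Theta(\log n)$ well-separated scales, each forcing $\Omega(\eps^{-1/2})$ coreset points) is a plausible guess at where the bound comes from, but the proposal has a genuine gap exactly where you flag it: Step~1 is never proved. You describe the single-scale $\Omega(\eps^{-1/2})$ bound only via ``information dimension'' and ``each coreset point can resolve only $O(1)$ thresholds,'' with no concrete gadget, no specification of the point locations, and no argument that rules out \emph{every} weighted multiset of size $o(\eps^{-1/2})$ --- which is the entire content of a coreset lower bound. Until that is carried out nothing is established, and it is not obviously easier than the full theorem: within one bounded-diameter cluster it is unclear what prevents a few well-placed weighted points from matching $\cost_p$ for all $p$ up to $1\pm\eps$. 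The locality claim in Step~2 (coreset points at one scale cannot help another) also needs justification, though that part is more routine.

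The paper's proof avoids the gadget difficulty entirely with a different mechanism. It takes a single point set $X$ with prefix sums $\sum_{j\le i} x_j = \sqrt{i}$ and the single center $c=0$, so that $W_X(p) := \cost_p(X,0) = \sqrt{p}$ is concave in $p$. The key observation --- which your proposal never reaches --- is that for any coreset $D$ the function $p \mapsto \cost_p(D,0)$ is \emph{piecewise linear in $p$ with at most $|D|$ pieces}, because consecutive values of $p$ that draw the $p$-th largest distance from the same weighted coreset point give the same increment. A linear piece that $(1\pm\eps)$-approximates the concave function $\sqrt{p}$ on an integer range $[a..b]$ must, by evaluating at $p \approx \sqrt{ab}$ where the concavity gap is largest, satisfy $b/a \le (1+O(\sqrt{\eps}))^4$; chaining these multiplicative ranges across $[n]$ yields $|D| = \Omega(\eps^{-1/2}\log n)$ in one stroke, with the $\log n$ and the $\eps^{-1/2}$ coming from the same calculation rather than from two separate constructions. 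If you wish to salvage your route, the piecewise-linearity of $p\mapsto\cost_p(D,c)$ for a fixed center is the missing tool; once you have it, the per-scale gadget becomes unnecessary.
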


While a simultaneous coreset preserves the objective value for all possible centers (in addition to all $p\in[n]$), 
our proof shows that even \emph{one specific} center
already requires $\Omega(\log n)$ size.
Our proof strategy is as follows.
Suppose $D$ is a simultaneous $\eps$-coreset for \OkM on $X \subset \mathbb{R}$ with $k = 1$, and let $c \in \mathbb{R}$ be some center to be picked later.
Since $D$ is a simultaneous coreset for \OkM, it is in particular a coreset for \pCentrum problems for all $p \in [n]$.
Let $W_X(p) := \cost_p(X, c)$ be the cost as a function of $p$, 
and let $W_D(p)$ be similarly for the coreset $D$,
when we view $X$, $D$ and the center $c$ as fixed. 
It is easy to verify that $W_D(\cdot)$ is a piece-wise linear function with only $O(|D|)$ pieces.
Now since $D$ is a simultaneous $\epsilon$-coreset,
the function $W_D(\cdot)$ has to approximate $W_X(\cdot)$ in the entire range,
and it would suffice to find an instance $X$ and a center $c$
for which $W_X(p)$ cannot be approximated well by a few linear pieces.
(Note that this argument never examines the coreset $D$ explicitly.)
The detailed proof follows. 

\begin{proof}
  Throughout, let $F(x) := \sqrt{x}$.
  Now consider the point set $X := \{x_1, x_2, \ldots, x_n\}\subset\R$,
  defined by its prefix-sums $\sum_{j\in [i]}{x_j} = F(i)$ (for all $i \in [n]$).
  It is easy to see that $1 = x_1 > x_2 > \cdots > x_n > 0$.
  Fix center $c := 0$ and consider a simultaneous $\eps$-coreset $D$ of size $|D|$. 
  Since $D$ is a simultaneous coreset for \OkM, $W_D(p) \in (1\pm \eps) \cdot W_X(p)$ for all $p \in [n]$,
  where by definition $W_X(p) = F(p)$.
	
  We will need the following claim,   which shows that each linear piece in $W_D(\cdot)$ (denote here by $g$) 
  cannot be too ``long'', as otherwise the relative error exceeds $\eps$. 
  We shall use the notation $[a..b]=\{a,a+1,\ldots,b\}$ for two integers $a\le b$.

  \begin{claim}
    \label{claim:global_lb_gap}
    Let $F$ be as above and let $g:\R\to\R$ be a linear function. 
    Then for every two integers $a\ge 1$ and $b \geq (1 + \frac{1}{\eps})^2$
    satisfying $\frac{b}{a} \geq (1+12\sqrt{\eps})^4$,
    there exists an integer $p \in [a..b]$ such that
    $g(p) \notin (1\pm \eps) \cdot F(p)$. 
  \end{claim}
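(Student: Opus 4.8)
The plan is to argue by contradiction: assume $g(p)\in(1\pm\eps)F(p)$ for \emph{every} integer $p\in[a..b]$, and extract a contradiction from the strict concavity of $F(x)=\sqrt{x}$. The engine is a three-point comparison at $a$, $b$, and one carefully chosen interior integer $m$. Since $g$ is linear, writing $m=(1-\lambda)a+\lambda b$ with $\lambda=\frac{m-a}{b-a}$ gives $g(m)=(1-\lambda)g(a)+\lambda g(b)$ exactly, so $g(m)$ is pinned to the secant of $g$ through $a,b$. Comparing $g$ against the secant $\mathrm{chord}(m):=(1-\lambda)F(a)+\lambda F(b)$ of $F$ and using the assumed relative errors at $a,b$ yields $|\mathrm{chord}(m)-g(m)|\le\eps\,\mathrm{chord}(m)$, while the assumed error at $m$ gives $|F(m)-g(m)|\le\eps F(m)$. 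Since $\mathrm{chord}(m)=F(m)-(F(m)-\mathrm{chord}(m))$, combining these by the triangle inequality forces the concavity gap to be small:
\[
  F(m)-\mathrm{chord}(m)\ \le\ \frac{2\eps}{1+\eps}\,F(m)\qquad\text{for every integer }m\in[a..b].
\]
(The cruder bound $2\eps\,F(m)$, obtained from $\mathrm{chord}(m)\le F(m)$, suffices when $\eps$ is small; the sharper form is what I will need near $\eps=\tfrac12$.)

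Next I would compute the gap explicitly for $F=\sqrt{\cdot}$. Substituting $u=\sqrt a$, $w=\sqrt b$, $v=\sqrt m$, a short calculation collapses the secant expression to the clean identity
\[
  F(m)-\mathrm{chord}(m)=\frac{(\sqrt m-\sqrt a)(\sqrt b-\sqrt m)}{\sqrt a+\sqrt b},
\]
so the \emph{relative} gap is $\frac{(v-u)(w-v)}{(u+w)v}$. Maximizing this over real $v\in[u,w]$ gives the critical point $v=\sqrt{uw}$, i.e. the geometric mean $m=\sqrt{ab}$, at which the relative gap equals $1-\frac{2\sqrt r}{1+r}=\frac{(\sqrt r-1)^2}{1+r}$ with $r:=\sqrt{b/a}$. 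The hypothesis $b/a\ge(1+12\sqrt\eps)^4$ gives $r\ge(1+12\sqrt\eps)^2$, and plugging in shows the relative gap at the geometric mean is at least $\frac{s^2}{2+2s+s^2}$ with $s=12\sqrt\eps$. This exceeds the target $\frac{2\eps}{1+\eps}$ with room to spare: the ratio of the two tends to $36$ as $\eps\to0$ and stays comfortably above $1$ throughout $0<\eps<\tfrac12$. Thus at the \emph{real} point $v_0=(ab)^{1/4}$ the displayed inequality is violated by a definite margin, which is the whole source of the contradiction.

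The last step, and the main technical obstacle, is integrality: the maximizer $m_0=\sqrt{ab}$ need not be an integer, so I must round it to the nearest integer $m\in[a..b]$ (which stays in range since $a\le\sqrt{ab}\le b$ and $a,b\in\mathbb{Z}$) and verify the violation survives. Here the second hypothesis $b\ge(1+\tfrac1\eps)^2$ enters: it forces $m_0\ge\sqrt b\ge 1+\tfrac1\eps$, hence $|\sqrt m-\sqrt{m_0}|\le\frac{1}{2\sqrt{m_0}}=O(\sqrt\eps)$, and because $v_0$ is a critical point of the relative-gap function, the gap is \emph{flat} there and changes only by $O(\eps)$ under this rounding. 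The deliberately generous constant $12$ — the exact-point argument needs only a constant near $\sqrt2$ — is precisely what absorbs this $O(\eps)$ loss, so the rounded $m$ still satisfies $F(m)-\mathrm{chord}(m)>\frac{2\eps}{1+\eps}F(m)$, contradicting the boxed inequality and proving that some integer $p\in[a..b]$ must have $g(p)\notin(1\pm\eps)F(p)$. I expect the fiddliest part to be the constant bookkeeping: confirming the margin over the entire range $n^{-1/3}<\eps<\tfrac12$ rather than merely asymptotically, and checking that the rounding perturbation is dominated by the slack for every admissible $a,b$. Everything else reduces to the clean concavity identity above.
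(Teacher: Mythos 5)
Your proposal is correct and follows essentially the same route as the paper: both arguments test $g$ against $F(x)=\sqrt{x}$ at (a rounding of) the geometric mean $\sqrt{ab}$, use the assumed two-sided approximation at $a$ and $b$ to pin $g$ to the chord of $F$, exploit concavity to show the chord falls short of $F$ there by more than the allowed $\Theta(\eps)$ relative error (the hypothesis $b/a\ge(1+12\sqrt\eps)^4$ supplying the margin), and use $b\ge(1+\tfrac1\eps)^2$ to absorb the integrality perturbation. The only differences are cosmetic — you phrase the contradiction via the explicit concavity-gap identity and a critical-point/rounding analysis, whereas the paper directly bounds the ratio $g(\lfloor\sqrt{ab}\rfloor)/F(\lfloor\sqrt{ab}\rfloor)$ below $1-\eps$ via the substitution $t=(b/a)^{1/4}$, $s=(\lfloor\sqrt{ab}\rfloor/\sqrt{ab})^{1/2}$.
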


  \begin{proof}
    We may assume both $g(a) \in (1\pm \eps)\cdot F(a)$ and $g(b) \in (1\pm \eps)\cdot F(b)$, as otherwise the claim is already proved.
		Since $g$ is linear, it is given by $g(x) = k (x - a) + g(a)$, where $k := \frac{g(b) - g(a)}{b-a}$.
		Let $\widehat{p} := \lfloor \sqrt{ab} \rfloor$. Observe that $\widehat{p} \in [a..b]$,
		thus it suffices to prove that 
		\begin{align*}
                  \frac{g(\widehat{p})}{F(\widehat{p})}
                  < 1 - \eps.
		\end{align*}
                
The intuition of picking $\widehat{p} = \lfloor \sqrt{ab} \rfloor$
is that $x = \sqrt{ab}$ maximizes $\frac{F(x)}{\widehat{g}(x)}$,
where $\widehat{g}(x)$ is the linear function passing through $(a, F(a))$
and $(b, F(b))$,
and we know that this $\widehat{g}(x)$ should be ``close'' to $g$ as $g(a) \in (1\pm \eps) \cdot F(a)$ and $g(b) \in (1\pm \eps) \cdot F(b)$.
(Notice that since $F$ is concave,
$\widehat{g}(x) \leq F(x)$ for all $x \in [a, b]$.)

To analyze this more formally, 
		\begin{align*}
		g(\widehat{p})
		&=\frac{g(b)-g(a)}{b-a} (\lfloor \sqrt{ab} \rfloor - a) + g(a) \\
		&=\frac{ (\lfloor \sqrt{ab} \rfloor - a) \cdot g(b) + (b - \lfloor \sqrt{ab}\rfloor) \cdot g(a) }{b-a} \\
		&\in (1\pm \epsilon) \cdot \frac{ (\lfloor \sqrt{ab} \rfloor - a)\sqrt{b} + (b - \lfloor \sqrt{ab}\rfloor)\sqrt{a} }{b-a} \\
		& = (1\pm \epsilon) \cdot  \frac{\lfloor \sqrt{ab} \rfloor + \sqrt{ab}}{\sqrt{a} + \sqrt{b}}.
		\end{align*}
    Therefore,
		\begin{align*}
		0 < \frac{g(\widehat{p})}{F(\widehat{p})}
		\leq (1 + \epsilon) \cdot \frac{ \lfloor \sqrt{ab} \rfloor + \sqrt{ab} }{ (\sqrt{a} + \sqrt{b})\cdot \sqrt{\lfloor \sqrt{ab} \rfloor} }.
                \end{align*}
To simplify notation,
let $t := \left(\frac{b}{a}\right)^{1/4}$ and $s := \left(\frac{\lfloor \sqrt{ab} \rfloor}{\sqrt{ab}} \right)^{1/2}$.
By simple calculations, our assumptions
$\frac{b}{a} \geq (1 + 12\sqrt{\eps})^4$ and $b \geq (1 + \frac{1}{\eps})^2$
imply the following facts: 
$t + t^{-1} \geq 2 + 11\eps$ and $\frac{1}{1+\eps} \leq s \leq 1$.
And now we have
\begin{align*}
	(1+\eps) \cdot \frac{ \lfloor \sqrt{ab} \rfloor + \sqrt{ab} }{ (\sqrt{a} + \sqrt{b})\cdot \sqrt{\lfloor \sqrt{ab} \rfloor} }
	&= (1+\eps)\cdot \frac{ s + s^{-1} }{ t + t^{-1} } \nonumber \\
	&\leq \frac{(1+\eps)(2+\eps)}{2 + 11\eps}  \nonumber \\
	&< 1 - \eps . \end{align*}
Altogether,
we obtain
$0 < \frac{g(\widehat p)}{F(\widehat p)} < 1 - \eps$, 
which completes the proof of Claim~\ref{claim:global_lb_gap}.
\end{proof}

We proceed with the proof of Theorem~\ref{thm:lb}.
Recall that $W_D(p)$ is the sum of the $p$ largest distances from points
in $D$ to $c$, when multiplicities are taken into account.
Thus, if the $p$-th largest distance for all $p\in[a..b]$
arise from the same point of $D$ (with appropriate multiplicity), 
then $W_D(p)-W_D(p-1)$ is just that distance, regardless of $p$,
which means that $W_D(p)$ is linear in this range. 
It follows that $W_D(p)$ is piece-wise linear with at most $|D|$ pieces. 
By Claim~\ref{claim:global_lb_gap} and the error bound of the coreset,
if a linear piece of $W_D(p)$ spans $p=[a..b]$ where $(1 + \frac{1}{\eps})^2 \leq b  \leq n$, then $b \leq (1 + 12 \sqrt{\eps})^4 \cdot a \leq (1 + O(\sqrt{\eps})) \cdot a$.
Since all the linear pieces span together all of $[n]$, we conclude that
$|D| = \Omega\left(\log_{1 + O(\sqrt{\eps})}(\eps^2 n) \right) = \Omega\left(\frac{\log n}{\sqrt{\eps}}\right)$
and proves Theorem~\ref{thm:lb}.
\end{proof}

 \section{Experiments}
\label{sec:exp}

We evaluate our coreset algorithm experimentally on real 2D geographical data. Our data set is the whole Hong Kong region extracted from OpenStreetMap~\cite{OpenStreetMap}, with complex objects such as roads replaced with their geometric means. The data set consists of about 1.5 million 2D points and is illustrated in Figure~\ref{fig:data_set}.
Thus, $d=2$ and $n\approx 1.5\cdot 10^6$ throughout our experiments.

\begin{figure*}[ht]
	\centering
	\captionsetup{font=small}
	\begin{subfigure}[b]{0.48\linewidth}
		\centering
		\includegraphics[width=\linewidth]{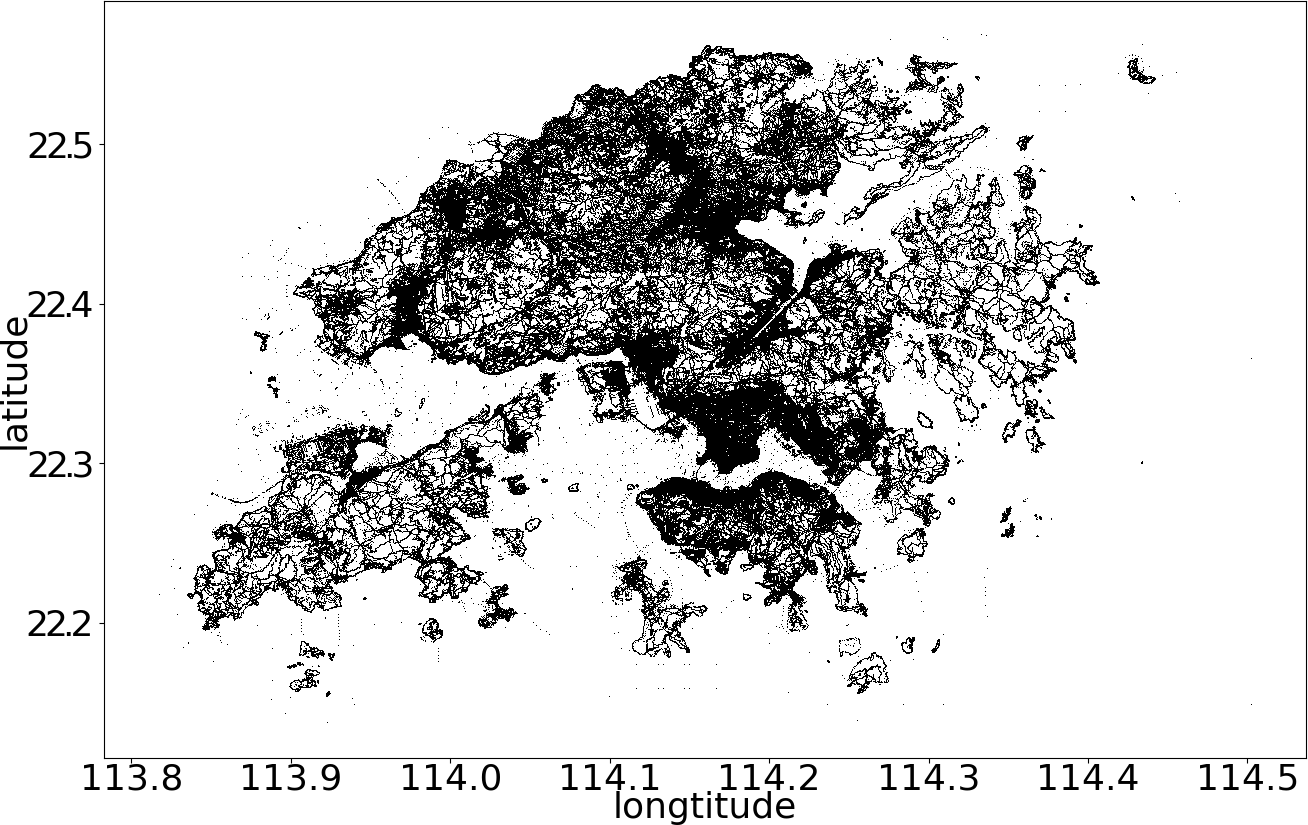}
	\end{subfigure}
        \qquad
	\begin{subfigure}[b]{0.447\linewidth}
		\centering
		\includegraphics[width=\linewidth]{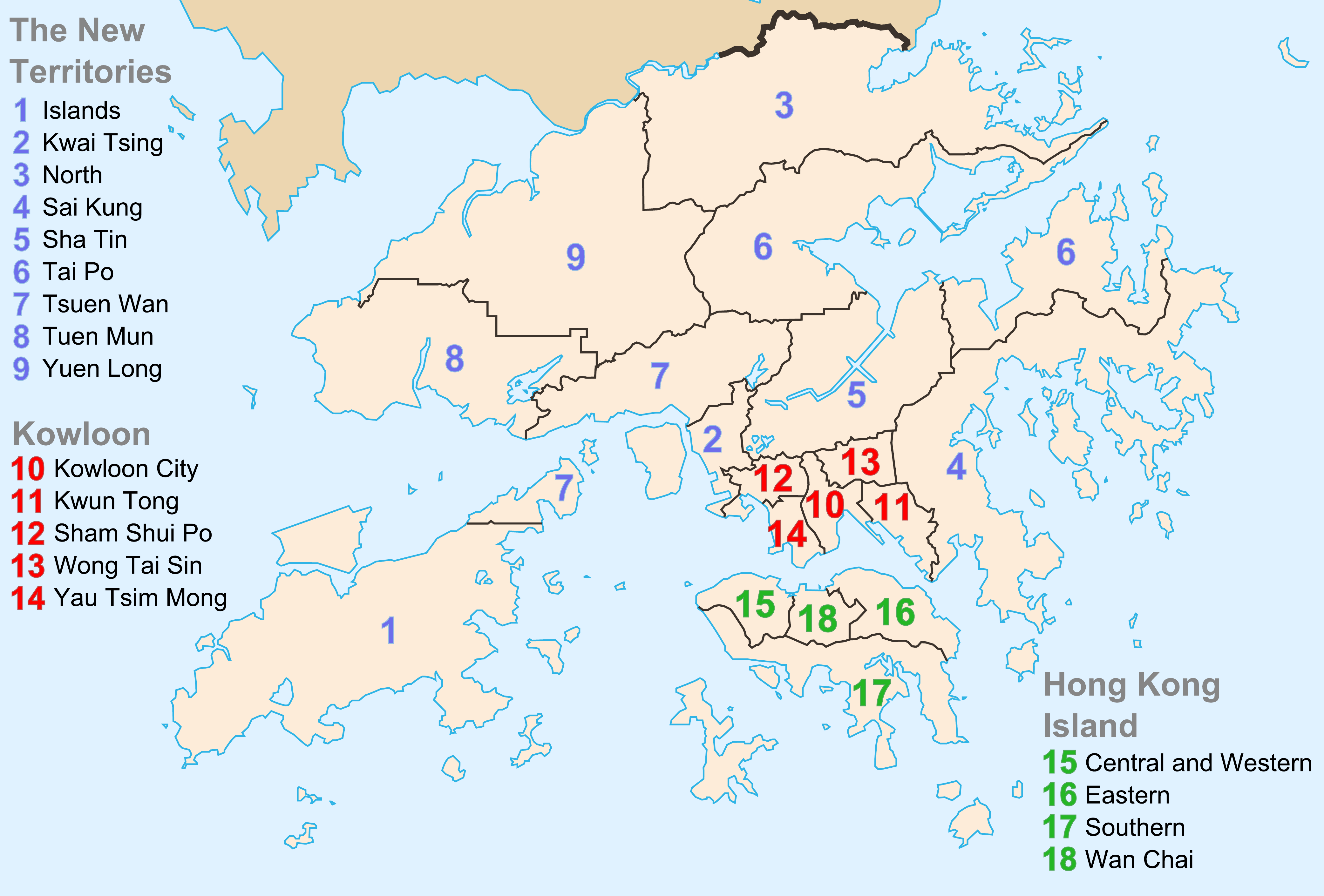}
	\end{subfigure}
  \caption{Demonstration of the data set.
    The 2D points extracted from~\cite{OpenStreetMap} are plotted on the left, 
    next to a map of Hong Kong~\cite{wiki:hk} on the right. 
  }
  \label{fig:data_set}
  \hrulefill 
\end{figure*}

\paragraph{Implementation}
Recall that our coreset construction requires an initial center set $C$
that is an $O(1)$-approximation for the \pCentrum problem.
However, \pCentrum is NP-hard as it includes \kCenter
(which is NP-hard even for points in $\R^2$),
and polynomial-time $O(1)$-approximation algorithms known for it~\cite{byrka2018constant,chakrabarty2017interpolating}
are either not efficient enough for our large data set or too complicated to implement.
Our experiments deal with an easier problem (small $k$ and points in $\R^2$),
but since we are not aware of a good algorithm for it,
our implementation employs instead the following simple heuristic:
sample random centers from the data points multiple times, 
and take the sample with the best (smallest) objective value.

Our first experiment evaluates the performance of this heuristic.
The results in Figure~\ref{fig:sampling_heuristic} show that
30 samples suffice to obtain a good solution for our data set.
The rest the algorithm is implemented following the description
in Section~\ref{sec:simul},
while relying on the above heuristic as if it achieves $O(1)$-approximation. 
Thus, the experiments in this section for various $\eps$, $p$ and $k$,
all evaluate a version of the algorithm that uses the heuristic.

\begin{figure*}[ht]
	\centering
	\captionsetup{font=small}
	\begin{subfigure}[b]{0.49\linewidth}
		\includegraphics[width=\linewidth]{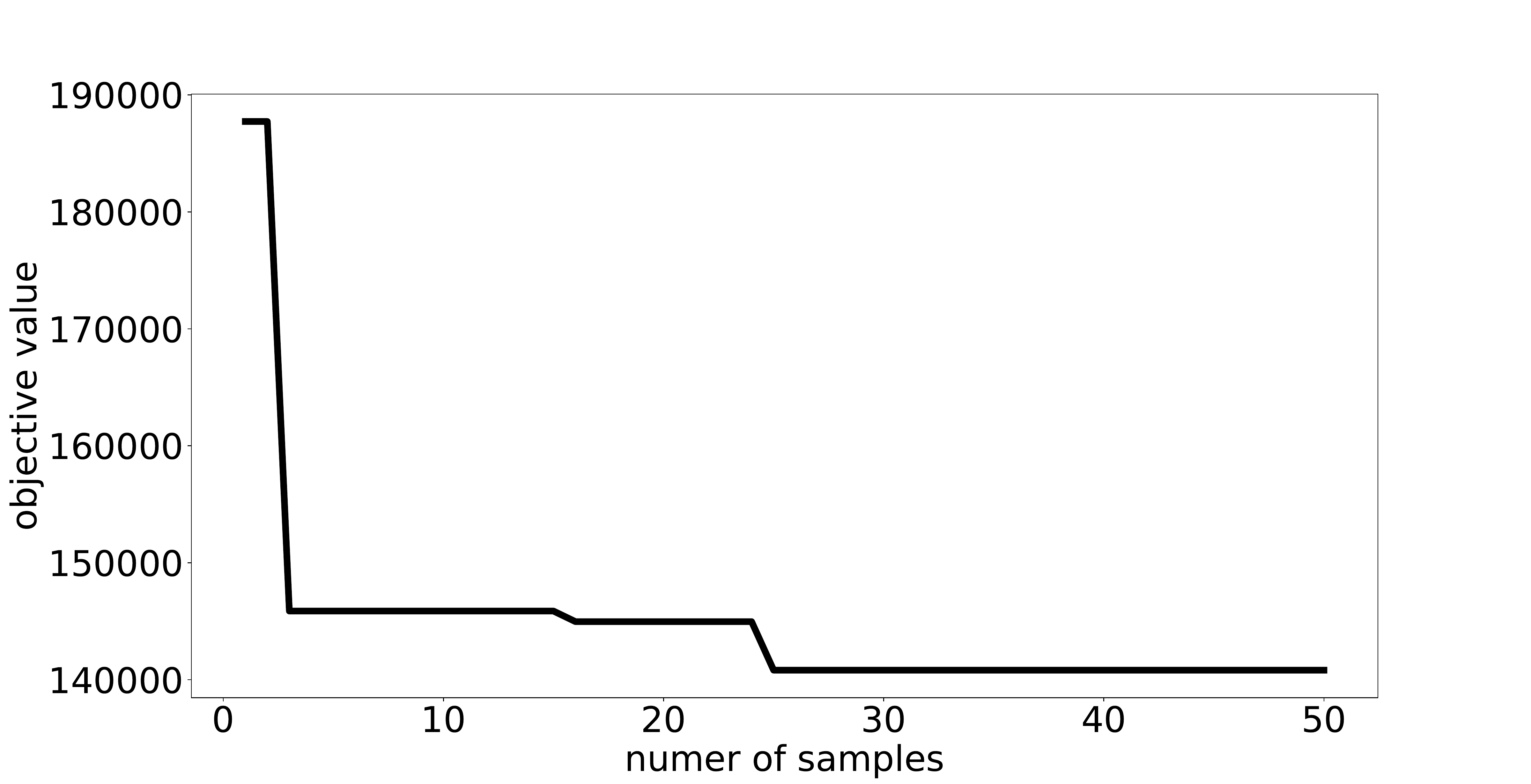}
		\caption{$p = n$, $k = 2$}
	\end{subfigure}
	\begin{subfigure}[b]{0.49\linewidth}
		\includegraphics[width=\linewidth]{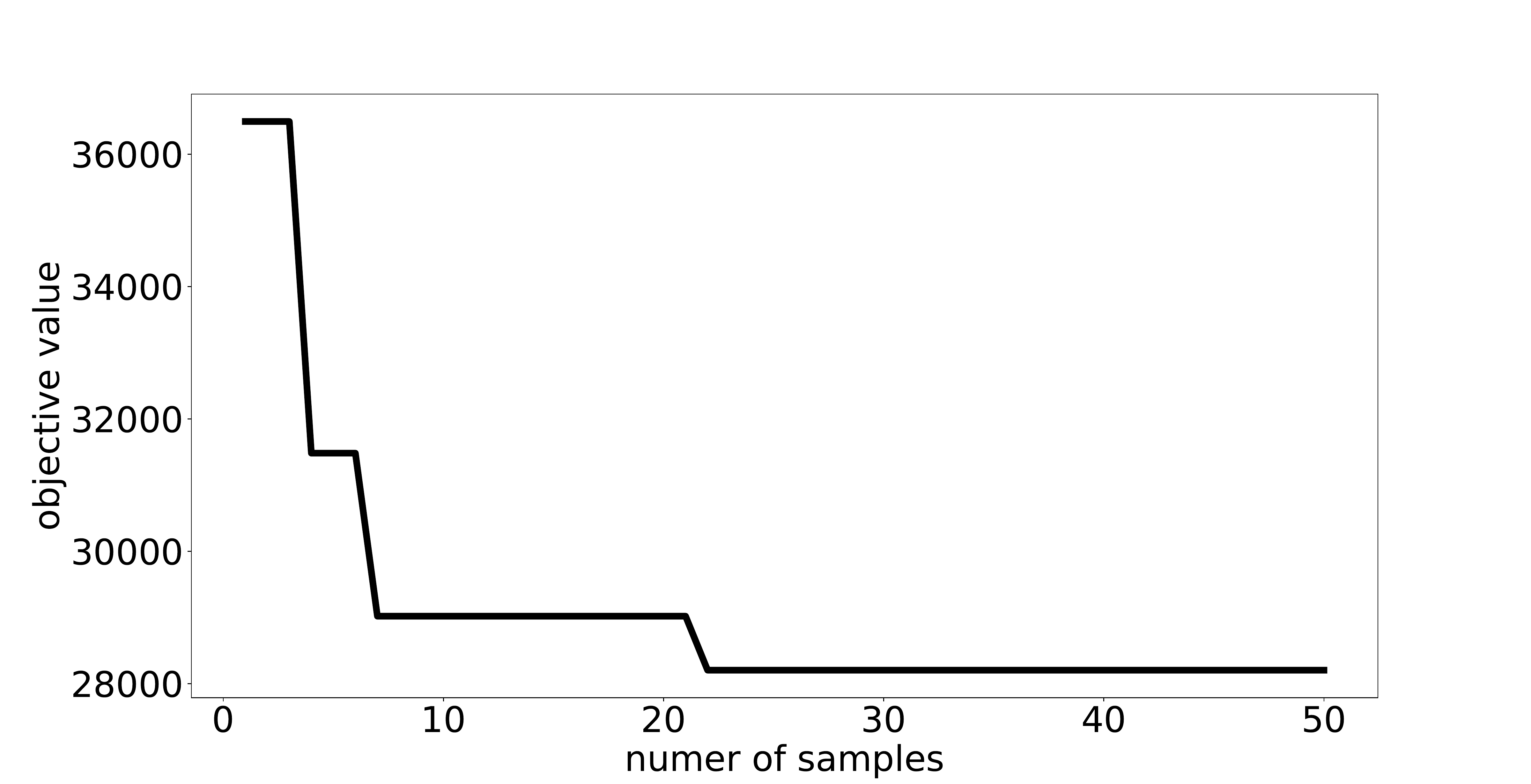}
		\caption{$p = 0.1 n$, $k = 2$}
	\end{subfigure}
	\caption{Performance of our \pCentrum heuristic, which takes the best of multiple randomly sampled centers. }
	\label{fig:sampling_heuristic}
  \hrulefill 
\end{figure*}

\paragraph{Performance Evaluation}
To examine the performance of our coreset algorithm for \pCentrum 
(using the heuristic for the initial centers),
we execute it with parameters $p = 0.1n$ and $k = 2$,
and let the error guarantee $\epsilon$ vary, 
to see how it affects the empirical size and error of the coreset. 
To evaluate the empirical error, we sample 100 random centers (each consisting of $k=2$ points) from inside the bounding box of the data set,
and take the maximum relative error, 
where the relative error of coreset $X'$ on centers $C$ is defined as $\left|\frac{\mathrm{cost}(X', C)}{\mathrm{cost}(X, C)} - 1\right|$
(similarly to how we measure $\epsilon$).
We report also the total running time for computing the objective for the above mentioned 100 random centers, comparing between the original data set $X$ and on the coreset $X$', denoted by $T_X$ and $T_X'$, respectively.
All our experiments were conducted on a laptop computer with an Intel 4-core 2.8 GHz CPU and 64 GB memory. The algorithms are written in Java programming language and are implemented single threaded.

These experiments are reported in Table~\ref{tab:acc_size_time}. 
It is easily seen that the empirical error is far lower
than the error guarantee $\epsilon$ (around half),
even though we used the simple heuristic for the initial centers.
Halving $\epsilon$ typically doubles the coreset size,
but overall the coreset size is rather small,
and translates to a massive speedup (more than 500x)
in the time it takes to compute the objective value.
Such small coresets open the door to running on the data set
less efficient but more accurate clustering algorithms.

\begin{table}[t]
	\centering
	\captionsetup{font=small}
	\small
	\caption{Comparing coresets constructed for varying $\epsilon$ (and the same $p = 0.1n$ and $k=2$). 
	}
	\label{tab:acc_size_time}
	\begin{tabular}{ccccc}
		\toprule
		$\epsilon$ & emp.~err. & coreset size & $T_{\text{X}}$ (ms) & $T_{\text{X'}}$ (ms) \\
		\midrule
		50\% & 17.9\% & 122 & 143910 & 16 \\
		30\% & 14.3\% & 256 & 147216 & 15 \\
		20\% & 10.6\% & 475 & 131718 & 16 \\
		10\% & 7.0\% & 1603 & 134512 & 63 \\
		5\% & 2.8\% & 5385 & 130633 & 203 \\
		\bottomrule
	\end{tabular}
\end{table}

In Theorem~\ref{thm:okm}, making the coreset work for all $p$ values
incurs an $O(\log^2 n)$ factor in the coreset size (see Section~\ref{sec:simul}).
We thus experimented whether a single coreset $X'$,
that is constructed for parameters $p = 0.1 n$, $\eps = 10\%$, and $k = 2$,
is effective for a wide range of values of $p'\neq p$.
As seen in Table~\ref{tab:varying_p},
this single coreset achieves low empirical errors (without increasing the size).
We further evaluate this same coreset $X'$ (with $p = 0.1n$)
for weight vectors $w$ that satisfy a power law (instead of 0/1 vectors).
In particular, we let $w_i = \frac{1}{i^\alpha}$ for $\alpha > 0$,
and experiment with varying $\alpha$.
The empirical errors of this coreset, reported in Table~\ref{tab:weighted},
are worse than that in Table~\ref{tab:varying_p} and is sometimes slightly larger than the error guarantee $\eps = 10\%$, but it is still well under control. Thus, $X'$ serves as a simultaneous coreset for various weight vectors, and can be particularly useful in the important scenario of data exploration, where different weight parameters are experimented with.

\begin{table}[t]
	\centering
	\captionsetup{font=small}
	\small
	\caption{Evaluating a single coreset (constructed for $\eps = 10\%$, $p = 0.1n$, $k = 2$) for varying $p'$ and for varying power-law weights. 
	}
	\label{tab:two}
	\begin{subtable}{0.4\linewidth}
		\centering
		\caption{varying $p'$}
		\begin{tabular}{cc}
			\toprule
			$p'$ & emp.~err. \\
			\midrule
			$0.01n$ & 4.0\% \\
			$0.05n$ & 6.6\%\\
			$0.2n$  & 5.0\% \\
			$0.3n$  & 4.1\% \\
			$0.4n$	& 3.6\% \\
			$0.5n$  & 3.3\% \\
			$n$     & 4.5\% \\
		\end{tabular}
		\label{tab:varying_p}
	\end{subtable}
	\begin{subtable}{0.4\linewidth}
		\centering
		\caption{power-law weights}
		\begin{tabular}{cc}
			\toprule
			$\alpha$ & emp.~err. \\
			\midrule
			0.5 & 3.2\% \\
			1.0 & 9.0\% \\
			1.5 & 11.1\% \\
			2.0 & 11.5\% \\
			2.5 & 11.6\% \\
			3.0 & 11.7\% \\
			3.5 & 11.7\% \\
		\end{tabular}
		\label{tab:weighted}
	\end{subtable}

    \hrulefill
\end{table}

\ifprocs
\bibliographystyle{icml2019}
\else
\bibliographystyle{alphaurlinit}
\fi

\bibliography{probdb}

\end{document}